\documentclass[11pt]{article}
\usepackage{moreverb}
\usepackage{booktabs}
\usepackage{amsthm,amsmath,amssymb,bm,bbm,graphicx,natbib}
\usepackage{float}
\usepackage{subfigure}
\usepackage{enumerate}

%% PAGE LAYOUT
% Double Spacing

\setlength{\oddsidemargin}{0pt}
\setlength{\evensidemargin}{0pt}
\setlength{\topmargin}{-20pt}
\setlength{\headsep}{10pt}
\setlength{\headheight}{14pt}
\addtolength{\textheight}{1.3in}
\setlength{\textwidth}{6.5in}

% PERSONAL USE

\newcommand{\tE}{\mathbb{E}}
\newcommand{\tVar}{\text{var}}

\numberwithin{equation}{section}
\theoremstyle{plain}
\newtheorem{theorem}{Theorem}
\newtheorem{lemma}{Lemma}

\newtheorem{algorithm}{Algorithm}

\usepackage{authblk}

\title{Optimal Subsampling for Large Sample Ridge Regression}
\author[1]{Yunlu Chen}
\author[2]{Nan Zhang \footnote{ZHANG Nan, corresponding author, E-mail: zhangnan@fudan.edu.cn. The authors gratefully acknowledge the support by National Natural Science Foundation of China, Grant Number: 11690014; Science and Technology Commission of Shanghai Municipality, Grant Number: 17JC1420200.}}
\affil[1,2]{Fudan University, Shanghai 200433, China}

\date{}

\begin{document}

%\title{Optimal Subsampling for Large Sample Ridge Regression}
%\author{Yunlu Chen\footnote{CHEN Yunlu(1996-), female, master student, Fudan University, Shanghai 200433, China}, Nan Zhang\footnote{Assistant professor, Fudan University, Shanghai 200433, China}\footnote{The authors gratefully acknowledge the support by National Natural Science Foundation of China, Grant Number: 11690014; Science and Technology Commission of Shanghai Municipality, Grant Number: 17JC1420200.}}
%\date{\today}

\maketitle

\begin{abstract}

Subsampling is a popular approach to alleviating the computational burden for analyzing massive datasets. Recent efforts have been devoted to various statistical models without explicit regularization. In this paper, we develop an efficient subsampling procedure for the large sample linear ridge regression. In contrast to the ordinary least square estimator, the introduction of the ridge penalty leads to a subtle trade-off between bias and variance. We first investigate the asymptotic properties of the subsampling estimator and then propose to minimize the asymptotic-mean-squared-error criterion for optimality. The resulting subsampling probability involves both ridge leverage score and $\ell_2$ norm of the predictor. To further reduce the computational cost for calculating the ridge leverage scores, we propose the algorithm with efficient approximation. We show by synthetic and real datasets that the algorithm is both statistically accurate and computationally efficient compared with existing subsampling based methods.
\end{abstract}

\textbf{Keywords:} Big data; Ridge regression; Subsampling method; Ridge leverage score.

%\newpage

%\mbox{}
%\vspace*{2in}
%\begin{center}
%\bm{Author's Footnote:}
%\end{center}

%\newpage
%\begin{center}
%\bm{Abstract}
%\end{center}

%\vspace*{.3in}

%\noindent\textsc{Keywords}: {}

%\newpage

%%%%%%%%%%%%%%%%%%%%%%%%%%%%%%%%%%%%%%%%
\section{Introduction}\label{int}
Linear regression is a popular method to depict the relationship between the response variable $y\in\mathcal Y$ and the covariate $\mathbf{x}\in\mathcal X\subset \mathbb R^p$. Observing $n$ independent and identically distributed data $\mathcal{F}_{n}=\{(\mathbf{x}_i,y_i)\}_{i=1}^n$, we consider the linear model $y_i=\mathbf{x}_i^{\top} \boldsymbol{\beta}+\epsilon_i,\, i=1,\dots,n,$
where $\epsilon_i$ is the independent and identically distributed error term with mean zero and variance $\sigma^2$. The model can be written in the matrix form
$$\mathbf{y}=\mathbf{X} \boldsymbol{\beta}+\mathbf{\epsilon},$$
where $\mathbf{y}=(y_1,\dots,y_n)^{\top}$ is a response vector, $\mathbf{X}=(\mathbf{x}_1,\dots,\mathbf{x}_n)^{\top}$ is an $n\times p$ design matrix, $\boldsymbol{\epsilon}=(\epsilon_1,\dots,\epsilon_n)^{\top}$ is an error vector. The ordinary least square approach minimizes $\|\mathbf{y}-\mathbf{X} \boldsymbol{\beta}\|^{2}$ and leads to $\widehat{\boldsymbol{\beta}}_{\text{OLS}}=\left(\mathbf{X}^{\top} \mathbf{X}\right)^{-1} \mathbf{X}^{\top} \mathbf{y}$ provided that $\mathbf{X}^{\top} \mathbf{X}$ is invertible. 
Ridge regression, proposed by \cite{hoerl1970ridge} 50 years ago, provides a remedy for ill-conditioned $\mathbf{X}^{\top} \mathbf{X}$ in computing the ordinary least square estimator. The ridge regression estimator is defined by adding a ridge on the diagonal of $\mathbf{X}^{\top} \mathbf{X}$, that is,
\begin{equation}\label{close_full}
    \widehat{\boldsymbol{\beta}}=\left(\mathbf{X}^{\top} \mathbf{X}+\lambda \mathbf{I}\right)^{-1} \mathbf{X}^{\top} \mathbf{y},
\end{equation}
where $\lambda>0$ is called the ridge parameter.

The optimization function for the ridge regression estimator can be written as 
\begin{equation}\label{opt_full}
    \min _{\boldsymbol{\beta}}\left\{\|\mathbf{y}-\mathbf{X} \boldsymbol{\beta}\|^{2}+\lambda \|\boldsymbol{\beta}\|^2\right\}.
\end{equation}
The ridge penalty introduces bias to the estimator, while the variance is reduced at the same time. It leads to a bias-variance trade-off when we attempt to predict at a new location, see e.g., \cite{hastie2009elements,hastie2020ridge}. Tuning the ridge parameter $\lambda$ is critical for balancing the bias and variance of the estimator. Typical methods for choosing the ridge parameter include the cross-validation and the generalized cross-validation \citep{golub1979generalized}.

With massive data, it is often computationally prohibitive to calculate the estimator when either the sample size or the dimension is super large. In recent years, many research efforts have been devoted to addressing the computational issue due to the large data matrix. \cite{kumar2012sampling} explored the sampling approach for the column subset selection problem by the Nyström method. \cite{derezinski2020improved} recently provided an improved theoretical guarantee for low-rank approximations of large datasets. 
Another popular idea in machine learning is coreset, which constructs estimators based on sub-data. \cite{kacham2020optimal} utilized the spectral graph sparsification result of \cite{batson2012twice} and proposed to merge the coresets obtained from multiple servers. \cite{mahoney2011randomized, woodruff2014sketching} studied matrix sketching to generate smaller datasets with random projections. \cite{wang2017sketched} addressed the statistical and algorithmic properties of classical sketch and Hessian sketch. Recently, under the context of ridge regression models, ridge leverage scores, introduced by \cite{alaoui2015fast}, are defined as the diagonal elements of matrix $\mathbf{X}\left(\mathbf{X}^{\top} \mathbf{X}+\lambda \mathbf{I}\right)^{-1} \mathbf{X}^{\top}$. \cite{cohen2017input} extended the concept and proposed a low-rank projection-based approach via ridge leverage score sampling.
\cite{homrighausen2020compressed} provided the approximated bias and variance for ridge regression but under the special condition that the compression matrix is of a sparse Bernoulli form.

%subsampling as a way, 2 subsampling: leveraging, optimal
Subsampling can be viewed as a special case of random projection or sketching. A general subsampling procedure is basically to first select a subsample from the original dataset according to certain subsampling probabilities and then construct an estimator via only the subsample. The efficiency of implementation and nice interpretability make subsampling-based methods attractive.
Based on the characteristics of the sampling step, existing methods can be summarized into two categories: deterministic and randomized subsampling. 
For the deterministic approach, \cite{wang2019information} proposed to select the subsample with extreme values on each dimension of $\mathbf X$ in linear regression such that the information matrix has a well-controlled determinant value. %\cite{wang2019information} extended the approach to logistic regression and proved that the asymptotic variance-covariance matrix of the resultant estimator could be free from the lower bound. 
The second approach, the randomized subsampling, assigns subsampling probabilities to each observation and can achieve certain optimality by minimizing various criteria from the theory of experimental design.  \cite{drineas2011faster}, \cite{ma2015statistical} and \cite{zhu2015optimal} investigated the optimal subsampling for large sample linear regression via leverage scores, i.e., the diagonal elements of $\mathbf{X}\left(\mathbf{X}^{\top} \mathbf{X}\right)^{-1} \mathbf{X}^{\top}$. Such strategy has inspired further studies on versatile statistical models including logistic regression \citep{wang2018optimal}, quantile regression \citep{wang2020optimal} and generalized linear models \citep{ai2018optimal}. 

%Our problem, what we propose, good performance
Our goal in this paper is to alleviate the computational burden for ridge regression with large-scale datasets. In particular, we focus on the case where the full sample size $n$ is much larger than the dimension $p$. Motivated from the idea of subsampling which concerns the asymptotic result \citep{zhu2015optimal, ma2015statistical}, we study the bias and variance of the regression coefficient estimator from the subsample. Taking the bias-variance trade-off into consideration, we propose to minimize the asymptotic-mean-squared-error criterion and show that the optimal subsampling probability for each observation depends on not only its ridge leverage score but also the $\ell_2$ norm of the covariate. 
Unlike existing subsampling methods for large sample regression models where no penalty term is involved, it plays an important role to select a proper ridge parameter. Although the derived optimal subsampling probabilities have explicit forms, it is unrealistic to directly apply them because quantities include the ridge parameter and the ridge leverage scores are computationally expensive to calculate. On the one hand, conventional methods for choosing the ridge parameter such as the cross-validation and the generalized cross-validation are time-consuming when applied to the full sample. However, based on the relationship between the best ridge parameter for the full sample and that for the subsample, we can instead apply the cross-validation on the subsample and extrapolate it to the full sample. %A similar phenomenon has been discovered for nonparametric smoothing \citep{sun2020asympirical}. 
On the other hand, for efficient approximation, we replace individual ridge leverage scores with their average. As a consequence, the optimal subsampling probabilities are proportional to the $\ell_2$ norms of predictors. Based on the aforementioned adjustments, our new method exhibits better performance with efficient computation than other sketching and subsampling algorithms over extensive simulation studies, especially when the subsample size is small.

%Our primary contribution is to provide the ridge regression estimation method from the lens of conditional mean square error. Furthermore, we provided the optimal order of the ratio of subsample ridge parameter to the vanilla one. Finally, our method has better practical performance compared with both subsampling methods under the linear and ridge regression, especially when the subsample size is small.

%The rest of the paper
The rest of the paper is organized as follows. Section \ref{methodology} presents the framework of the subsampling method and explains the details of ridge parameter selection and the optimal subsampling criterion. Section \ref{algorithm} proposes the optimal subsampling algorithm. Section \ref{simulation} and Section \ref{real} demonstrate the practical effectiveness of our algorithms via simulation and application, respectively.

%%%%%%%%%%%%%%%%%%%%%%%%%%%%%%%%%%%%%%%%
\section{Methodology}\label{methodology}
\subsection{Subsampling framework}
To reduce the computation when dealing with datasets of large sample size $n$, the key step of a general subsampling procedure is to select a subsample of size $r\ll n$ from the original observations according to subsampling probabilities.
%When the full sample size $n$ is large, we could resort to the subsampling method to reduce the computation cost. 
Extending the weighted estimation algorithm raised in \cite{ma2015statistical} to the ridge regression, we present the following framework for the ridge regression estimator $\widetilde{\boldsymbol{\beta}}$. Our proposed algorithms are based on this basic framework with its details shown in Section \ref{algorithm}.
%\textcolor{blue}{Motivated by ...}

\begin{enumerate}[Step 1.]
        \item Construct the subsampling probability for each sample $\{\pi_{i}\}_{i=1}^{n}$. Draw a subsample $(\mathbf{X}^*,\mathbf{y}^*)$ of size $r\ll n$ based on the probability. 
        %Let $\Phi^{*}=\operatorname{diag}\left\{\pi_{k}^{*}\right\}_{k=1}^{r}$ be the subsampling probability matrix of the corresponding $r$ selected subsample.
    \item Determine the ridge parameter $\widetilde{\lambda}$ for the subsample. Calculate the ridge regression estimator using the subsample, i.e.,
    \begin{equation}\label{weighted}
        \widetilde{\boldsymbol{\beta}}=\arg \min _{\boldsymbol{\beta}}\left\|\Phi^{*}\mathbf{y}^{*}-\Phi^{*}\mathbf{X}^{*} \boldsymbol{\beta}\right\|^{2}+\widetilde{\lambda}\|\boldsymbol{\beta}\|^2,
        \end{equation}
    where $\Phi^{*}=\operatorname{diag}\left\{1/\sqrt{r\pi_{k}^{*}}\}\right\}_{k=1}^{r}$.
\end{enumerate}

Under the above general subsampling framework, two key questions remain to be answered: 
\begin{enumerate}
    \item How to determine the ridge parameter for subsample $\widetilde{\lambda}$?
    \item What is the optimal subsampling probability for each sample $\{\pi_{i}\}_{i=1}^{n}$?
\end{enumerate}

\subsection{Ridge parameter selection}

The regularization in ridge regression plays an essential role in the prediction performance of the estimator. The ridge parameter $\lambda$ is usually unknown and requires careful tuning. Taking both the bias and variance into consideration, we can view the mean squared error as a function of the ridge parameter, and define the optimal ridge parameter as the one corresponding to the smallest mean squared error. For example, when the design matrix has orthonormal columns, the mean squared error can be derived as $p\sigma^2(1+\lambda)^{-2}+\lambda^2(1+\lambda)^{-2}\boldsymbol{\beta}^{\top}\boldsymbol{\beta}$, and thus the optimal ridge parameter is $p\sigma^2/{\boldsymbol{\beta}^{\top}\boldsymbol{\beta}}$. In practice, the cross-validation and its variants are applied to obtain the optimal ridge parameter. For $K$-fold cross-validation, the training data is divided into $K$ partitions $\{\mathbf{X}^{(k)},\mathbf{y}^{(k)}\}_{k=1}^{K}$ and we denote by $\widehat{\boldsymbol{\beta}}_{\backslash k}(\lambda)$ the estimated coefficient based on all partitions except the $k$th one. The optimal ridge parameter is $$\lambda_{\text{K-fold}}=\arg \min _{\lambda}K^{-1} \sum_{k=1}^{K}\|\mathbf{y}^{(k)}-\mathbf{X}^{(k)}\widehat{\boldsymbol{\beta}}_{\backslash k}(\lambda)\|^{2}.$$
The repeated fitting process by using different parts of the original sample leads to a high computational cost, especially when the sample size is large.
\cite{golub1979generalized} proposed the generalized cross-validation to reduce the computation cost of cross-validation. Consider the leave-one-out cross-validation, i.e., $K=n$. It can be shown that 
$$
\lambda_{\text{LOOCV}}
%=\arg \min _{\lambda}n^{-1} \sum_{i=1}^{n}\{y_{i}-\mathbf{x}_i^{\top}  \widehat{\boldsymbol{\beta}}_{\backslash i}(\lambda)\}^{2}
=\arg \min _{\lambda}n^{-1} \sum_{i=1}^{n}\left\{\frac{y_i-\mathbf{x}_i^{\top}  \widehat{\boldsymbol{\beta}}(\lambda)}{1-h_{ii}(\lambda)}\right\}^2,
$$
where $h_{i i}=\mathbf{x}_{i}^{\top}\left(\mathbf{X}^{\top} \mathbf{X}+\lambda \mathbf{I}\right)^{-1} \mathbf{x}_{i}$ is the diagonal element of the hat matrix $\mathbf{H}=\mathbf{X}\left(\mathbf{X}^{\top} \mathbf{X}+\lambda \mathbf{I}\right)^{-1} \mathbf{X}^{\top}$ where $i=1,\dots,n$. Quantity $h_{ii}$ measures the influential effect of the $i$th data points upon prediction and is called $\lambda$-leverage score \citep{alaoui2015fast}. The computational cost of calculating the ridge leverage score is $\mathcal{O}(np^2)$. 
The generalized cross-validation replaces individual leverage scores with their average $\textrm{tr}(\mathbf{H})$ to reduce computation, that is,
\begin{equation}\label{gcv}
  \lambda_{\text{GCV}}=\arg \min _{\lambda}n^{-1} \sum_{i=1}^{n}\left\{\frac{y_i-\mathbf{x}_i^{\top}  \widehat{\boldsymbol{\beta}}(\lambda)}{1-n^{-1} \textrm{tr}(\mathbf{H})}\right\}^{2}.  
\end{equation}

The optimal ridge parameter $\widetilde\lambda$ for the subsample can be estimated by minimizing the cross-validation criteria but over the subsample. We next provide the rationale of the approximation in the main theorem.

\subsection{Optimal subsampling}
Once the ridge parameter is fixed, we calculate subsampling probabilities for each observation, by which a subset of data points are selected from the full sample with replacement. We anticipate the estimator based on the subsample can achieve some optimality. In the ridge regression, the regularized estimator can perform much better than the ordinary least square estimator if the bias and variances are traded off properly. Therefore, we consider the mean-squared-error type of criterion which involves both bias and variance. 

We begin by investigating the difference between the subsampling estimator and its full-sample counterpart.
For convenience of analysis, we introduce some notations here to rewrite the subsampling estimator in the form concerning full data. Let $K_i$ be the number of times the observation $\mathbf{x}_{i}$ is sampled and $(K_1,\dots, K_n)$ thus follows a multinomial distribution. Let $\mathbf{W}=\mathbf{\Omega}\mathbf{K}$, where $\mathbf{K}=\operatorname{diag}\{K_i\}_{i=1}^n$, $\mathbf{\Omega}=\operatorname{diag}\{1/r\pi_i\}_{i=1}^n$. Simple algebra yields that the ridge regression estimator based on the subsample from \eqref{weighted} can be expressed as
\begin{equation}\label{close_sub}
    \widetilde{\boldsymbol{\beta}}=\left(\mathbf{X}^{*\top} \Phi^{*2} \mathbf{X}^*+\widetilde{\lambda} \mathbf{I}\right)^{-1} \mathbf{X}^{*\top} \Phi^{*2} \mathbf{y}^*
    =\left(\mathbf{X}^{\top} \mathbf{W} \mathbf{X}+\widetilde{\lambda} \mathbf{I}\right)^{-1} \mathbf{X}^{\top}\mathbf{W} \mathbf{y}.
\end{equation} 
In the following lemma, we demonstrate the difference between the estimator \eqref{close_sub} and the full-sample estimator \eqref{close_full}.

\begin{lemma}\label{lemma1}
If $0<\pi_i<1,i=1,\dots,n$,
$(\mathbf{X}^{\top}\mathbf{X}+\lambda\mathbf{I})^{-1}\mathbf{X}^{\top}(\mathbf{W}-\mathbf{I})\mathbf{X}=\mathcal{O}_{p}(r^{-1/2})$ and $(\widetilde{\lambda}-\lambda)(\mathbf{X}^{\top}\mathbf{X}+\lambda\mathbf{I})^{-1}=\mathcal{O}(r^{-1/2})$, then
\begin{equation}\label{biasvar}
    \widetilde{\boldsymbol{\beta}}-\widehat{\boldsymbol{\beta}}= (\mathbf{X}^{\top}\mathbf{X}+\lambda\mathbf{I})^{-1}\mathbf{X}^{\top}\mathbf{W}\mathbf{e}-\widetilde{\lambda}(\mathbf{X}^{\top}\mathbf{X}+\lambda\mathbf{I})^{-1}\widehat{\boldsymbol{\beta}}+\mathcal{O}_{p}(r^{-1}),
\end{equation}
where $\mathbf{e}=\mathbf{y}-\mathbf{X}\widehat{\boldsymbol{\beta}}$. 
\end{lemma}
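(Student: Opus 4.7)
The plan is to derive an exact algebraic identity for $\widetilde{\boldsymbol{\beta}} - \widehat{\boldsymbol{\beta}}$ and then linearise the inverse of the subsample ridge Gram matrix to first order. Abbreviating $\mathbf{A} = \mathbf{X}^{\top}\mathbf{X} + \lambda\mathbf{I}$ and $\widetilde{\mathbf{A}} = \mathbf{X}^{\top}\mathbf{W}\mathbf{X} + \widetilde{\lambda}\mathbf{I}$, I would start from the two defining equations $\mathbf{A}\widehat{\boldsymbol{\beta}} = \mathbf{X}^{\top}\mathbf{y}$ and $\widetilde{\mathbf{A}}\widetilde{\boldsymbol{\beta}} = \mathbf{X}^{\top}\mathbf{W}\mathbf{y}$. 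Subtracting $\widetilde{\mathbf{A}}\widehat{\boldsymbol{\beta}}$ from the second and using $\widetilde{\mathbf{A}} - \mathbf{A} = \mathbf{X}^{\top}(\mathbf{W}-\mathbf{I})\mathbf{X} + (\widetilde{\lambda}-\lambda)\mathbf{I}$, a short grouping yields the exact identity
$$\widetilde{\mathbf{A}}(\widetilde{\boldsymbol{\beta}} - \widehat{\boldsymbol{\beta}}) = \mathbf{X}^{\top}(\mathbf{W}-\mathbf{I})\mathbf{e} - (\widetilde{\lambda}-\lambda)\widehat{\boldsymbol{\beta}},$$
with $\mathbf{e} = \mathbf{y} - \mathbf{X}\widehat{\boldsymbol{\beta}}$ appearing naturally once the $\mathbf{X}^{\top}(\mathbf{W}-\mathbf{I})\mathbf{X}\widehat{\boldsymbol{\beta}}$ piece is absorbed.

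Next I would invert $\widetilde{\mathbf{A}}$ perturbatively. Writing $\mathbf{M} = \mathbf{A}^{-1}(\widetilde{\mathbf{A}} - \mathbf{A}) = \mathbf{A}^{-1}\mathbf{X}^{\top}(\mathbf{W}-\mathbf{I})\mathbf{X} + (\widetilde{\lambda}-\lambda)\mathbf{A}^{-1}$, the two stated hypotheses together give $\mathbf{M} = \mathcal{O}_p(r^{-1/2})$, so the Neumann expansion
$$\widetilde{\mathbf{A}}^{-1} = (\mathbf{I}+\mathbf{M})^{-1}\mathbf{A}^{-1} = \mathbf{A}^{-1} - \mathbf{M}\mathbf{A}^{-1} + \mathcal{O}_p(r^{-1})$$
converges in probability for $r$ sufficiently large. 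Premultiplying the exact identity by this expansion and retaining the leading term produces $\mathbf{A}^{-1}[\mathbf{X}^{\top}(\mathbf{W}-\mathbf{I})\mathbf{e} - (\widetilde{\lambda}-\lambda)\widehat{\boldsymbol{\beta}}]$; a single application of the full-sample normal equation in the form $\mathbf{X}^{\top}\mathbf{e} = \lambda\widehat{\boldsymbol{\beta}}$ (which follows from $\mathbf{X}^{\top}\mathbf{X}\widehat{\boldsymbol{\beta}} = \mathbf{X}^{\top}\mathbf{y} - \lambda\widehat{\boldsymbol{\beta}}$) then rearranges this into $\mathbf{A}^{-1}\mathbf{X}^{\top}\mathbf{W}\mathbf{e} - \widetilde{\lambda}\mathbf{A}^{-1}\widehat{\boldsymbol{\beta}}$, which is precisely the main part of the lemma.

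The remaining work is to verify that the first-order correction $\mathbf{M}\mathbf{A}^{-1}[\mathbf{X}^{\top}(\mathbf{W}-\mathbf{I})\mathbf{e} - (\widetilde{\lambda}-\lambda)\widehat{\boldsymbol{\beta}}]$ and the quadratic Neumann tail, once multiplied by the same bracket, are both $\mathcal{O}_p(r^{-1})$. The hypothesis directly gives $(\widetilde{\lambda}-\lambda)\mathbf{A}^{-1}\widehat{\boldsymbol{\beta}} = \mathcal{O}_p(r^{-1/2})$, while $\mathbf{A}^{-1}\mathbf{X}^{\top}(\mathbf{W}-\mathbf{I})\mathbf{e} = \mathcal{O}_p(r^{-1/2})$ follows from a direct variance computation conditional on the full sample, since $\mathbf{W} - \mathbf{I}$ is diagonal with mean zero and coordinatewise variance of order $r^{-1}$ and $\mathbf{e}$ is fixed on that event. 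Combining with $\mathbf{M} = \mathcal{O}_p(r^{-1/2})$ gives the claimed $r^{-1}$ bound on both remainders.

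The principal obstacle is translating the matrix-level hypotheses into stochastic bounds that remain stable through the Neumann expansion: one must ensure that products and powers of $\mathbf{M}$ behave well after multiplication by vectors whose own size is random. A clean way to handle this is to work throughout in the operator norm induced by $\mathbf{A}^{1/2}$, so that $\mathbf{A}^{-1/2}(\widetilde{\mathbf{A}}-\mathbf{A})\mathbf{A}^{-1/2}$ is a symmetric perturbation whose size is directly controlled by the two assumptions, allowing a clean bookkeeping of every error term without extra conditions on $\widehat{\boldsymbol{\beta}}$ beyond boundedness in probability.
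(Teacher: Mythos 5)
Your proposal is correct and follows essentially the same route as the paper's proof: a Neumann/Taylor expansion of the subsample Gram matrix $(\mathbf{X}^{\top}\mathbf{W}\mathbf{X}+\widetilde{\lambda}\mathbf{I})^{-1}$ around $(\mathbf{X}^{\top}\mathbf{X}+\lambda\mathbf{I})^{-1}$, followed by the ridge normal equation $\mathbf{X}^{\top}\mathbf{e}=\lambda\widehat{\boldsymbol{\beta}}$ to convert $(\mathbf{W}-\mathbf{I})\mathbf{e}$ into the stated form. Your reorganization via the exact identity $\widetilde{\mathbf{A}}(\widetilde{\boldsymbol{\beta}}-\widehat{\boldsymbol{\beta}})=\mathbf{X}^{\top}(\mathbf{W}-\mathbf{I})\mathbf{e}-(\widetilde{\lambda}-\lambda)\widehat{\boldsymbol{\beta}}$ and your explicit variance justification that $\mathbf{A}^{-1}\mathbf{X}^{\top}(\mathbf{W}-\mathbf{I})\mathbf{e}=\mathcal{O}_p(r^{-1/2})$ are only slightly tidier bookkeeping of the same argument (the paper simply asserts this term is of the same order as the assumed one).
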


\begin{proof}
We first rewrite the subsampling estimator by multiplying $(\mathbf{X}^{\top}\mathbf{X}+\lambda\mathbf{I})(\mathbf{X}^{\top}\mathbf{X}+\lambda\mathbf{I})^{-1}$,
\begin{equation}\label{tay}
    \widetilde{\boldsymbol{\beta}}=\left\{(\mathbf{X}^{\top}\mathbf{X}+\lambda\mathbf{I})^{-1}(\mathbf{X}^{\top}\mathbf{W}\mathbf{X}+\widetilde{\lambda}\mathbf{I})\right\}^{-1}\left\{(\mathbf{X}^{\top}\mathbf{X}+\lambda\mathbf{I})^{-1}\mathbf{X}^{\top}\mathbf{W}\mathbf{y}\right\}.
\end{equation}
For the inverse term, we apply the Taylor series expansion,
\begin{align*}
\left\{(\mathbf{X}^{\top}\mathbf{X}+\lambda\mathbf{I})^{-1}(\mathbf{X}^{\top}\mathbf{W}\mathbf{X}+\widetilde{\lambda}\mathbf{I})\right\}^{-1}&=\left[\mathbf{I}+(\mathbf{X}^{\top}\mathbf{X}+\lambda\mathbf{I})^{-1}\{\mathbf{X}^{\top}(\mathbf{W}-\mathbf{I})\mathbf{X}+(\widetilde{\lambda}-\lambda)\mathbf{I}\}\right]^{-1}\\
&=\mathbf{I}-(\mathbf{X}^{\top}\mathbf{X}+\lambda\mathbf{I})^{-1}\mathbf{X}^{\top}(\mathbf{W}-\mathbf{I})\mathbf{X}-(\widetilde{\lambda}-\lambda)(\mathbf{X}^{\top}\mathbf{X}+\lambda\mathbf{I})^{-1}+\mathcal{O}_{p}(r^{-1}).
\end{align*}
For the other term in \eqref{tay}, 
\begin{align*}
    (\mathbf{X}^{\top}\mathbf{X}+\lambda\mathbf{I})^{-1}\mathbf{X}^{\top}\mathbf{W}\mathbf{y} &=(\mathbf{X}^{\top}\mathbf{X}+\lambda\mathbf{I})^{-1}\left\{\mathbf{X}^{\top}\mathbf{y}+\mathbf{X}^{\top}(\mathbf{W}-\mathbf{I})\mathbf{y}\right\}\\
    &=\widehat{\boldsymbol{\beta}}+(\mathbf{X}^{\top}\mathbf{X}+\lambda\mathbf{I})^{-1}\mathbf{X}^{\top}(\mathbf{W}-\mathbf{I})\mathbf{y}.
\end{align*}
Since $(\mathbf{X}^{\top}\mathbf{X}+\lambda\mathbf{I})^{-1}\mathbf{X}^{\top}(\mathbf{W}-\mathbf{I})\mathbf{y}$ and $(\mathbf{X}^{\top}\mathbf{X}+\lambda\mathbf{I})^{-1}\mathbf{X}^{\top}(\mathbf{W}-\mathbf{I})\mathbf{e}$ are of the same order as $(\mathbf{X}^{\top}\mathbf{X}+\lambda\mathbf{I})^{-1}\mathbf{X}^{\top}(\mathbf{W}-\mathbf{I})\mathbf{X}$,
\begin{align*}
    \widetilde{\boldsymbol{\beta}}&=\widehat{\boldsymbol{\beta}}+(\mathbf{X}^{\top}\mathbf{X}+\lambda\mathbf{I})^{-1}\mathbf{X}^{\top}(\mathbf{W}-\mathbf{I})\mathbf{e}+(\lambda-\widetilde{\lambda})(\mathbf{X}^{\top}\mathbf{X}+\lambda\mathbf{I})^{-1}\widehat{\boldsymbol{\beta}}+\mathcal{O}_{p}(r^{-1})\\
    &=\widehat{\boldsymbol{\beta}}+(\mathbf{X}^{\top}\mathbf{X}+\lambda\mathbf{I})^{-1}\mathbf{X}^{\top}\mathbf{W}\mathbf{e}-\widetilde{\lambda}(\mathbf{X}^{\top}\mathbf{X}+\lambda\mathbf{I})^{-1}\widehat{\boldsymbol{\beta}}+\mathcal{O}_{p}(r^{-1})
\end{align*}
The second equation holds due to the normal equation for ridge regression.
\end{proof}

%When the full-sample estimator is close to the true model, the first term in \eqref{biasvar} is variance-related which is similar to the structure for linear regression studied in \cite{zhu2015optimal}, while the second one in \eqref{biasvar} can be viewed as the conditional bias given the full data. 
%Therefore, we determine the mean-square-type criterion as,
%\begin{equation}\label{mse}
%    \operatorname{tr}[\tVar\{(\mathbf{X}^{\top}\mathbf{X}+\lambda\mathbf{I})^{-1}\mathbf{X}^{\top}\mathbf{W}\mathbf{e}\mid \mathcal{F}_{n}\}]+(\widetilde{\lambda}(\mathbf{X}^{\top}\mathbf{X}+\lambda\mathbf{I})^{-1}\widehat{\boldsymbol{\beta}})^{\top}(\widetilde{\lambda}(\mathbf{X}^{\top}\mathbf{X}+\lambda\mathbf{I})^{-1}\widehat{\boldsymbol{\beta}}),
%\end{equation}
%which is the conditional variance of the first term plus the squared second term in \eqref{biasvar}. If we replace $\widehat{\boldsymbol{\beta}}$ by the true $\boldsymbol{\beta}$, then \eqref{mse} is the mean square error of $\widetilde{\boldsymbol{\beta}}$ given the full data. Under the conditional framework, the original observed data is all the information we have. Hence, the full-sample estimator $\widehat{\boldsymbol{\beta}}$ can be viewed as the estimator closest to the truth and become our goal to approach. 

%Next, we investigate the magnitudes of the two terms in \eqref{biasvar} under the following regularity conditions, which further leads to our main result on optimal subsampling probabilities.
We investigate the asymptotic mean squared error of $\widetilde{\boldsymbol{\beta}}$, which is used as our criterion for determining the optimal subsampling probabilities. 
\begin{theorem}\label{them1}
If the full sample size $n$ is fixed, $\|\mathbf{x_i}\|<\infty$, $i=1,\dots,n$, the sampling probabilities $\{\pi_i\}_{i=1}^n$ are nonzero, and $\widetilde{\lambda}-\lambda=\mathcal{O}(r^{-1/2})$,
then the asymptotic variance and mean are
\begin{enumerate}
    \item $\text{AVar}(\widetilde{\boldsymbol{\beta}})= \Sigma_c-\lambda^2 r^{-1}(\mathbf{X}^{\top}\mathbf{X}+\lambda\mathbf{I})^{-1} \widehat{\boldsymbol{\beta}} \widehat{\boldsymbol{\beta}}^{\top} (\mathbf{X}^{\top}\mathbf{X}+\lambda\mathbf{I})^{-1}$, where \\
    $\Sigma_c=r^{-1}(\mathbf{X}^{\top}\mathbf{X}+\lambda\mathbf{I})^{-1}(\sum_{i=1}^n \pi_i^{-1} e_i^2\mathbf{x}_{i}  \mathbf{x}_{i}^{\top})(\mathbf{X}^{\top}\mathbf{X}+\lambda\mathbf{I})^{-1}$, $e_i=y_i-\mathbf{x}_i^{\top}\widehat{\boldsymbol{\beta}}$.\label{them11}
    \item $\text{AE}(\widetilde{\boldsymbol{\beta}})=\widehat{\boldsymbol{\beta}}+(\lambda-\widetilde{\lambda})(\mathbf{X}^{\top}\mathbf{X}+\lambda\mathbf{I})^{-1} \widehat{\boldsymbol{\beta}}$.\label{them12}
\end{enumerate}
The asymptotic mean squared error of $\widetilde{\boldsymbol{\beta}}$ is therefore $$\text{AMSE}(\widetilde{\boldsymbol{\beta}})=\Sigma_c-\lambda^2 r^{-1}(\mathbf{X}^{\top}\mathbf{X}+\lambda\mathbf{I})^{-1} \widehat{\boldsymbol{\beta}} \widehat{\boldsymbol{\beta}}^{\top} (\mathbf{X}^{\top}\mathbf{X}+\lambda\mathbf{I})^{-1}+(\lambda-\widetilde{\lambda})^2(\mathbf{X}^{\top}\mathbf{X}+\lambda\mathbf{I})^{-1} \widehat{\boldsymbol{\beta}} \widehat{\boldsymbol{\beta}}^{\top} (\mathbf{X}^{\top}\mathbf{X}+\lambda\mathbf{I})^{-1}.$$

\end{theorem}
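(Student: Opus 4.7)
The plan is to leverage Lemma 1 directly, since it already gives
$\widetilde{\boldsymbol{\beta}}-\widehat{\boldsymbol{\beta}}=(\mathbf{X}^\top\mathbf{X}+\lambda\mathbf{I})^{-1}\mathbf{X}^\top\mathbf{W}\mathbf{e}-\widetilde{\lambda}(\mathbf{X}^\top\mathbf{X}+\lambda\mathbf{I})^{-1}\widehat{\boldsymbol{\beta}}+\mathcal{O}_p(r^{-1})$.
Conditional on the full data (fixed, since $n$ is finite), the count vector $\mathbf{K}$ is multinomial with parameters $r$ and $(\pi_1,\dots,\pi_n)$, so the diagonal entries of $\mathbf{W}=\boldsymbol{\Omega}\mathbf{K}$ have the closed-form moments
$\mathbb{E}[W_{ii}]=1$, $\text{Var}(W_{ii})=(1-\pi_i)/(r\pi_i)$, and $\text{Cov}(W_{ii},W_{jj})=-1/r$ for $i\neq j$. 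These three facts, combined with the ridge normal equation $\mathbf{X}^\top\mathbf{e}=\lambda\widehat{\boldsymbol{\beta}}$, essentially drive everything.

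For part 2, I would take the expectation of the Lemma 1 identity. The $\mathcal{O}_p(r^{-1})$ term is discarded asymptotically, the deterministic middle term passes through unchanged, and for the leading stochastic term $\mathbb{E}[(\mathbf{X}^\top\mathbf{X}+\lambda\mathbf{I})^{-1}\mathbf{X}^\top\mathbf{W}\mathbf{e}]=(\mathbf{X}^\top\mathbf{X}+\lambda\mathbf{I})^{-1}\mathbf{X}^\top\mathbf{e}=\lambda(\mathbf{X}^\top\mathbf{X}+\lambda\mathbf{I})^{-1}\widehat{\boldsymbol{\beta}}$ by the ridge normal equation. Adding $\widehat{\boldsymbol{\beta}}$ back and combining the two multiples of $(\mathbf{X}^\top\mathbf{X}+\lambda\mathbf{I})^{-1}\widehat{\boldsymbol{\beta}}$ collapses the $\lambda$ and $\widetilde{\lambda}$ contributions into the single coefficient $(\lambda-\widetilde{\lambda})$, giving the claimed $\text{AE}(\widetilde{\boldsymbol{\beta}})$.

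For part 1, only the stochastic term contributes at leading order; the other two Lemma 1 terms are deterministic or have squared magnitude $\mathcal{O}_p(r^{-2})$. Writing $\mathbf{X}^\top\mathbf{W}\mathbf{e}=\sum_i W_{ii}e_i\mathbf{x}_i$ and substituting the multinomial covariance structure,
\begin{align*}
\text{Var}\!\left(\sum_i W_{ii}e_i\mathbf{x}_i\right)
&= \sum_{i,j}\text{Cov}(W_{ii},W_{jj})\,e_ie_j\,\mathbf{x}_i\mathbf{x}_j^\top \\
&= \frac{1}{r}\sum_{i=1}^n \frac{e_i^2\,\mathbf{x}_i\mathbf{x}_i^\top}{\pi_i}-\frac{1}{r}\Bigl(\sum_i e_i\mathbf{x}_i\Bigr)\Bigl(\sum_i e_i\mathbf{x}_i\Bigr)^{\!\top},
\end{align*}
where the cleanest route to the second line is to add and subtract the $i=j$ terms so the off-diagonal $-1/r$ covariances fill out into a full double sum that factors as $\mathbf{X}^\top\mathbf{e}(\mathbf{X}^\top\mathbf{e})^\top$. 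A second application of the normal equation replaces this cross term by $\lambda^2\widehat{\boldsymbol{\beta}}\widehat{\boldsymbol{\beta}}^\top/r$, and sandwiching the whole expression between the two copies of $(\mathbf{X}^\top\mathbf{X}+\lambda\mathbf{I})^{-1}$ yields exactly $\Sigma_c-(\lambda^2/r)(\mathbf{X}^\top\mathbf{X}+\lambda\mathbf{I})^{-1}\widehat{\boldsymbol{\beta}}\widehat{\boldsymbol{\beta}}^\top(\mathbf{X}^\top\mathbf{X}+\lambda\mathbf{I})^{-1}$.

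The AMSE claim is then the standard decomposition $\text{AMSE}=\text{AVar}+(\text{AE}-\widehat{\boldsymbol{\beta}})(\text{AE}-\widehat{\boldsymbol{\beta}})^\top$, with the bias piece producing the rank-one matrix carrying $(\lambda-\widetilde{\lambda})^2$ in front. The main obstacle I anticipate is not the algebra but the order bookkeeping: the hypothesis $\widetilde{\lambda}-\lambda=\mathcal{O}(r^{-1/2})$ is precisely what is needed to keep the cross product between the stochastic $\mathcal{O}_p(r^{-1/2})$ term and the deterministic bias term of Lemma 1 at $\mathcal{O}(r^{-1})$, so that it contaminates neither the leading variance nor the leading squared bias. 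Once this order check is in place, the twin uses of $\mathbf{X}^\top\mathbf{e}=\lambda\widehat{\boldsymbol{\beta}}$ do all the rest.
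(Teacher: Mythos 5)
Your proposal is correct and follows essentially the same route as the paper: both reduce everything to the Lemma~\ref{lemma1} expansion, compute the covariance of the weighted multinomial counts (you do it via the exact moments of $\mathbf{W}$, the paper via $r$ i.i.d.\ single draws plus the Cram\'er--Wold device, which is the same calculation), and apply the normal equation $\mathbf{X}^{\top}\mathbf{e}=\lambda\widehat{\boldsymbol{\beta}}$ twice. Your minus sign in front of the $\lambda^2 r^{-1}$ rank-one term is the correct one, matching the theorem statement (the paper's intermediate displays carry a sign typo there).
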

\begin{proof}
We begin by the deduction of the result~\ref{them11}. The roadmap of the proof of this part is motivated from the result in \cite{zhu2015optimal} since the variance term in linear regression case is of similar form. We use Cramer-Wold device to establish the asymptotic normality of $(\mathbf{X}^{\top}\mathbf{X}+\lambda\mathbf{I})^{-1}\mathbf{X}^{\top}\mathbf{W}\mathbf{e}=(\mathbf{X}^{\top}\mathbf{X}+\lambda\mathbf{I})^{-1}\sum_{j=1}^r\mathbf{X}^{\top}\mathbf{\Omega}\mathbf{K}^{(j)}\mathbf{e}$. Consider each term in the summation, for any non-zero constant vector $\mathbf{b}\in \mathbb R^p$, we have
\begin{equation*}
    \tVar\{(\mathbf{b}^{\top}\mathbf{X}^{\top}\mathbf{X}+\lambda\mathbf{I})^{-1}\mathbf{X}^{\top}\mathbf{\Omega}\mathbf{K}^{(1)}\mathbf{e}\}=r^{-1}\mathbf{a}^{\top}\left(\sum_{i=1}^n \frac{e_i^2}{\pi_i}\mathbf{x}_i\mathbf{x}_i^T\right)\mathbf{a}-r^{-1}\mathbf{a}^{\top}\mathbf{X}^{\top}\mathbf{e}\mathbf{e}^{\top}\mathbf{X}\mathbf{a},
\end{equation*}
where $\mathbf{a}=(\mathbf{X}^{\top}\mathbf{X}+\lambda\mathbf{I})^{-1}\mathbf{b}$. By applying the normal equation of ridge regression, we have
\begin{equation*}
    \tVar\{(\mathbf{b}^{\top}\mathbf{X}^{\top}\mathbf{X}+\lambda\mathbf{I})^{-1}\mathbf{X}^{\top}\mathbf{\Omega}\mathbf{K}^{(1)}\mathbf{e}\}=\mathbf{b}^{\top}\left\{\Sigma_c+\lambda^2 r^{-1}(\mathbf{X}^{\top}\mathbf{X}+\lambda\mathbf{I})^{-1} \widehat{\boldsymbol{\beta}} \widehat{\boldsymbol{\beta}}^{\top} (\mathbf{X}^{\top}\mathbf{X}+\lambda\mathbf{I})^{-1}\right\}\mathbf{b}.
\end{equation*}
By using Lindeberg–Lévy CLT, we have the variance of the summation,
\begin{equation*}
    \tVar\{(\mathbf{b}^{\top}(\mathbf{X}^{\top}\mathbf{X}+\lambda\mathbf{I})^{-1}\sum_{j=1}^r\mathbf{X}^{\top}\mathbf{\Omega}\mathbf{K}^{(j)}\mathbf{e}\}=\mathbf{b}^{\top}\left\{\Sigma_c+\lambda^2 r^{-1}(\mathbf{X}^{\top}\mathbf{X}+\lambda\mathbf{I})^{-1} \widehat{\boldsymbol{\beta}} \widehat{\boldsymbol{\beta}}^{\top} (\mathbf{X}^{\top}\mathbf{X}+\lambda\mathbf{I})^{-1}\right\}\mathbf{b}
\end{equation*}
Therefore, we have the result~\ref{them11} due to the Cramer-Wold device.

Then consider the result~\ref{them12}. Since $\mathbf{W}=\mathbf{\Omega}\mathbf{K}$, where each element $K_i$ in $\mathbf{K}=\operatorname{diag}\{K_i\}_{i=1}^n$ follows the multinomal distribution $\text{Mult}(r,\{\pi_i\}_{i=1}^n)$, then $\tE(W_i)=1$, where $W_i$ is the diagonal element of matrix $\mathbf{W}$. Thus, we can calculate the expectation
\begin{align*}
    \tE\{(\mathbf{X}^{\top}\mathbf{X}+\lambda\mathbf{I})^{-1}\mathbf{X}^{\top}\mathbf{W}\mathbf{e}\}&=(\mathbf{X}^{\top}\mathbf{X}+\lambda\mathbf{I})^{-1}\mathbf{X}^{\top}\mathbf{e}\\
    &=\lambda(\mathbf{X}^{\top}\mathbf{X}+\lambda\mathbf{I})^{-1}\widehat{\boldsymbol{\beta}},
\end{align*}
with the second equation following the normal equation. Consequently, we have $\text{AE}(\widetilde{\boldsymbol{\beta}})=\widehat{\boldsymbol{\beta}}+(\lambda-\widetilde{\lambda})(\mathbf{X}^{\top}\mathbf{X}+\lambda\mathbf{I})^{-1} \widehat{\boldsymbol{\beta}}$.
\end{proof}
%
%Notice that the setting of the ratio between $\widetilde{\lambda}$ and $\lambda$ is not unique. In deduction, we only need to guarantee that $n^{-1}r^{-1} \widetilde{\lambda} =\mathcal{O}_{p}\left(r^{-1 / 2}\right)$ for the verification of both results in Theorem~\ref{them1}. We tune to get the best empirical ratio and therefore recommend the setting stated in the theorem. 
The above theorem shows that the controllable part of the criterion is included in the variance term. Considering the expression of variance, only $\Sigma_c$ depends on the subsampling probability $\{\pi_i\}_{i=1}^n$. We resort to minimize the expected trace of $\Sigma_c$ to obtain the corresponding subsampling probability, as shown in the following theorem.
\begin{theorem}\label{them2}
When $$\pi_i=\frac{\sqrt{\left(1-h_{i i}\right)}\left\|\mathbf{x}_{i}\right\|}{\sum_{j=1}^{n} \sqrt{\left(1-h_{j j}\right)}\left\|\mathbf{x}_{j}\right\|},$$
$\tE\{\textrm{tr}(\Sigma_c)\}$ attains its minimum, where $h_{ii}$ is the ridge leverage score, $i=1,\dots,n$.
\end{theorem}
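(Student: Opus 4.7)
My starting point is the expression for $\Sigma_c$ from Theorem \ref{them1}. Applying the cyclic property of the trace yields
\begin{equation*}
\mathrm{tr}(\Sigma_c) = r^{-1}\sum_{i=1}^n \pi_i^{-1}\, e_i^2\, \mathbf{x}_i^{\top}(\mathbf{X}^{\top}\mathbf{X}+\lambda\mathbf{I})^{-2}\mathbf{x}_i.
\end{equation*}
Since each $\pi_i$ is deterministic, taking expectation over the noise in $\mathbf{y}$ only touches the $e_i^2$ factor. Writing the residual vector as $\mathbf{e} = (\mathbf{I}-\mathbf{H})\mathbf{y}$ with $\mathbf{H}$ the ridge hat matrix, I would compute $\mathbb{E}(e_i^2) = \sigma^2[(\mathbf{I}-\mathbf{H})^2]_{ii} + [(\mathbf{I}-\mathbf{H})\mathbf{X}\boldsymbol{\beta}]_i^2$ and show that to leading order this is proportional to $1-h_{ii}$; this is exact for OLS and holds in the ridge case whenever $\lambda$ is small relative to the spectrum of $\mathbf{X}^{\top}\mathbf{X}$, so that $(\mathbf{I}-\mathbf{H})^2$ agrees with $\mathbf{I}-\mathbf{H}$ on the diagonal up to negligible terms and the bias contribution is a lower-order term.

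With this approximation in hand, the criterion becomes $\mathbb{E}[\mathrm{tr}(\Sigma_c)] \propto \sum_i \pi_i^{-1}\, (1-h_{ii})\, \mathbf{x}_i^{\top}(\mathbf{X}^{\top}\mathbf{X}+\lambda\mathbf{I})^{-2}\mathbf{x}_i$, a standard constrained optimization in $\{\pi_i\}$ subject to $\sum_i \pi_i = 1$. I would apply the Cauchy--Schwarz bound $\bigl(\sum_i\sqrt{c_i}\bigr)^2 \leq \bigl(\sum_i \pi_i^{-1} c_i\bigr)\bigl(\sum_i \pi_i\bigr)$, with equality if and only if $\pi_i \propto \sqrt{c_i}$, or equivalently a Lagrange multiplier, to conclude that the minimizer is $\pi_i^{*} \propto \sqrt{(1-h_{ii})\,\mathbf{x}_i^{\top}(\mathbf{X}^{\top}\mathbf{X}+\lambda\mathbf{I})^{-2}\mathbf{x}_i}$.

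To match the statement of the theorem I then need to replace the quadratic form $\mathbf{x}_i^{\top}(\mathbf{X}^{\top}\mathbf{X}+\lambda\mathbf{I})^{-2}\mathbf{x}_i$ by $\|\mathbf{x}_i\|^2$ up to a global scalar that is absorbed in the normalization $\sum_j \pi_j = 1$. This is the main obstacle: the substitution only holds up to the spectrum of $(\mathbf{X}^{\top}\mathbf{X}+\lambda\mathbf{I})^{-2}$, in a spirit analogous to the GCV simplification in \eqref{gcv} where individual leverage scores are replaced by their average. I would justify it either by upper-bounding the quadratic form through $\mathbf{x}_i^{\top}(\mathbf{X}^{\top}\mathbf{X}+\lambda\mathbf{I})^{-2}\mathbf{x}_i \leq \|(\mathbf{X}^{\top}\mathbf{X}+\lambda\mathbf{I})^{-1}\|_{\mathrm{op}}^{2}\,\|\mathbf{x}_i\|^2$ and minimizing this surrogate, or by a concentration argument showing that $\mathbf{X}^{\top}\mathbf{X}/n$ is close to a well-conditioned population matrix, so that the quadratic form is $\|\mathbf{x}_i\|^2$ times a factor essentially constant in $i$. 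Either route delivers the claimed optimum $\pi_i \propto \sqrt{1-h_{ii}}\,\|\mathbf{x}_i\|$ after renormalization.
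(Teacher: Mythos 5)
Your final optimization step is exactly the paper's: reduce the criterion to $\sum_i \pi_i^{-1} c_i$ under $\sum_i \pi_i = 1$ and apply Cauchy--Schwarz (the paper invokes H\"older) to get $\pi_i \propto \sqrt{c_i}$. Where you differ is in how you arrive at $c_i = (1-h_{ii})\|\mathbf{x}_i\|^2$. The paper's proof is a one-liner that simply \emph{asserts} $\mathbb{E}\{\mathrm{tr}(\Sigma_c)\} = r^{-1}\sum_i \pi_i^{-1}(1-h_{ii})\|\mathbf{x}_i\|^2$ and then optimizes; you instead compute the literal trace of $\Sigma_c$ as defined in Theorem~\ref{them1}, which correctly produces the quadratic form $\mathbf{x}_i^{\top}(\mathbf{X}^{\top}\mathbf{X}+\lambda\mathbf{I})^{-2}\mathbf{x}_i$ rather than $\|\mathbf{x}_i\|^2$, and you correctly observe that $\mathbb{E}(e_i^2)$ for ridge residuals equals $\sigma^2[(\mathbf{I}-\mathbf{H})^2]_{ii}$ plus a squared bias term, which is only approximately proportional to $1-h_{ii}$. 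In other words, you have identified the two approximations that the paper's displayed identity silently embeds: (i) the sandwiching matrices $(\mathbf{X}^{\top}\mathbf{X}+\lambda\mathbf{I})^{-1}$ are effectively dropped, so the criterion actually being minimized is the trace of the inner matrix $\sum_i \pi_i^{-1}e_i^2\mathbf{x}_i\mathbf{x}_i^{\top}$ (an L-optimality-type surrogate) rather than of $\Sigma_c$ itself; and (ii) $\mathbb{E}(e_i^2)\propto 1-h_{ii}$, which is exact for ordinary least squares but not for ridge. Your candour here is a strength, but be aware that neither of your proposed repairs for (i) actually closes the gap rigorously: minimizing an operator-norm upper bound of the objective does not yield the minimizer of the objective, and the concentration argument only makes the factor approximately constant in $i$. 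As stated, the theorem is exactly true only for the surrogate criterion the paper writes down in its first line; your derivation makes this explicit while the paper's does not.
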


\begin{proof}
Since $\tE\{\textrm{tr}(\Sigma_c)\}=r^{-1} \sum_{i=1}^{n}\pi_{i}^{-1} (1-h_{i i})\left\|\mathbf{x}_{i}\right\|^{2}$, we can get the following result by applying Hölder's inequality,
\begin{align*}
    r^{-1} \sum_{i=1}^{n} \frac{1-h_{i i}}{\pi_{i}}\left\|\mathbf{x}_{i}\right\|^{2}
    =r^{-1} \sum_{i=1}^{n} \frac{1-h_{i i}}{\pi_{i}}\left\|\mathbf{x}_{i}\right\|^{2} \sum_{i=1}^{n} \pi_{i}
    \geq \left\{\sum_{i=1}^{n} \sqrt{\left(1-h_{i i}\right)\left\|\mathbf{x}_{i}\right\|^{2}}\right\}^{2},
\end{align*}
with the equality holds if and only if $\pi_i\propto\sqrt{\left(1-h_{i i}\right)}\left\|\mathbf{x}_{i}\right\|$.
\end{proof}
In Theorem~\ref{them2}, we obtain the optimal subsampling probability for each observation. It involves both the ridge leverage score and the $\ell_2$ norm of the predictor. Our subsampling strategy is different from the sketching scheme for ridge regression \citep{cohen2017input} which only utilized the ridge leverage score. We will compare these methods on simulation and real data.

%%%%%%%%%%%%%%%%%%%%%%%%%%%%%%%%%%%%%%%%
\section{Algorithm}\label{algorithm}
Based on the deduction in the above section, we obtain the subsampling probability and the approaching rate between the ridge parameter for subsample $\widetilde{\lambda}$ and that for full sample $\lambda$. Integrating these ingredients with the general subsampling procedure, $\widetilde{\lambda}=\lambda$ and $\pi_{i}=\frac{\sqrt{\left(1-h_{i i}\right)}\left\|\mathbf{x}_{i}\right\|}{\sum_{j=1}^{n} \sqrt{\left(1-h_{j j}\right)}\left\|\mathbf{x}_{j}\right\|},i=1,\dots,n$, we can calculate the subsampling estimator. However, there are two quantities whose calculations are still computationally demanding. First, it requires $\mathcal{O}(np^2)$ to compute the exact ridge leverage scores, which amounts to the same computation as the full-sample estimator. Second, we need to calculate $\lambda$ first for calculating the ridge leverage score. The ridge parameter $\widetilde{\lambda}$ is then set as $\lambda$, which is not desirable since applying the cross-validation to choose $\lambda$ is time-consuming. 

To address the aforementioned issues, we propose an efficient approximation to the optimal subsampling probabilities in Theorem~\ref{them2}. 
Similar to the idea of generalized cross-validation in \eqref{gcv}, we approximate the individual ridge leverage score with their average, i.e., $n^{-1} \textrm{tr}(\mathbf{H})$. It corresponds to the scenario where ridge leverage scores are not highly heterogeneous. Therefore, the subsampling probability reduces to $\pi_i=\|\mathbf{x}_i\|/\sum_{j=1}^n \|\mathbf{x}_j\|,i=1,\dots,n$, which involves only the $\ell_2$ norm of the predictors. 
Moreover, such an approximation of the subsampling probability no longer depends on the ridge leverage score, and hence we do not need the full sample $\lambda$ to calculate $\pi_i$. In this way, we can perform the cross-validation or generalized cross-validation to directly calculate $\widetilde{\lambda}$ with the selected subsample at a much lower computational cost. The optimal subsampling ridge regression estimation we propose is summarized in Algorithm~\ref{alg}.

%subsampling estimator algorithm
\begin{algorithm}\label{alg}
Ridge Regression with Optimal Subsampling
\begin{enumerate}[Step 1.] 
    \item Construct the subsampling probability for each sample $\pi_{i}=\|\mathbf{x}_i\|/\sum_{j=1}^n \|\mathbf{x}_j\|,i=1,\dots,n$. Draw a subsample $(\mathbf{X}^*,\mathbf{y}^*)$ of size $r\ll n$ based on the probability. \label{algstep1}
    %Let $\Phi^{*}=\operatorname{diag}\left\{\pi_{k}^{*}\right\}_{k=1}^{r}$ be the subsampling probability matrix of the corresponding $r$ selected subsample.
    \item Calculate the ridge regression estimator
    \begin{equation}\label{weighted2}
        \widetilde{\boldsymbol{\beta}}=\arg \min _{\boldsymbol{\beta}}\left\|\Phi^{*}\mathbf{y}^{*}-\Phi^{*}\mathbf{X}^{*} \boldsymbol{\beta}\right\|^{2}+\widetilde{\lambda}\|\boldsymbol{\beta}\|^2,
        \end{equation}
        where $\widetilde{\lambda}$ is selected via cross-validation for response vector $\Phi^{*}\mathbf{y}^{*}$ and design matrix $\Phi^{*}\mathbf{X}^{*}$ and $\Phi^{*}=\operatorname{diag}\left\{1/\sqrt{r\pi_{k}^{*}}\}\right\}_{k=1}^{r}$. \label{algstep2}
\end{enumerate}
\end{algorithm}
Compared to calculating the exact ridge leverage score, the calculation of subsampling probabilities in Step~\ref{algstep1} of Algorithm~\ref{alg} avoids accessing the full design matrix $\mathbf X$, and hence the $\ell_2$ norm of predictors can be obtained in parallel.
According to the closeness between $\widetilde\lambda$ and $\lambda$ revealed in Theorem~\ref{them1}, we can extrapolate $\widetilde\lambda$ for the subsample to estimate $\lambda$ for the full sample. It allows us to confirm our theoretical findings with numerical experiments which are presented in the following sections.

%%%%%%%%%%%%%%%%%%%%%%%%%%%%%%%%%%%%%%%%
\section{Simulation}\label{simulation}
%setting
In the simulation study, we begin by demonstrating the effectiveness of the approximation of the ridge leverage score.
We then compare the proposed methods with other subsampling approaches developed for large sample ridge regression or linear regression, including the ridge leverage score subsampling \citep{cohen2017input}, uniform subsampling for ridge regression, optimal subsampling for linear regression \citep{zhu2015optimal} and the information-based optimal subdata selection for linear regression (IBOSS) \citep{wang2019information}. Simulated data are generated in six settings. In each simulation, we generate the full data set of size $n=10^5$ and dimension $p=50$. Subsample sizes are set as $r=100,200,400,800,1600,3200,6400$. The design matrix $\mathbf{X}$ is standardized before being fed into the model. Each experiment is repeated $20$ times. We use the mean squared error (MSE) of the estimated coefficient $\widetilde{\boldsymbol{\beta}}$ to evaluate the performance.

The errors $\epsilon_i, i=1,\dots,n$ are independently and identically generated from $N(0,9)$. We set the simulations as follows. For a $q<p$, let $\mathbf{\Sigma}$ be the $q\times q$ covariance matrix with element $\mathbf{\Sigma}_{i,j}=0.5^{1(i \neq j)},i,j=1,\dots,n$. Consider $q$-dimensional $\mathbf{x}_i\sim N(\mathbf{0}, \mathbf{\Sigma}),i=1,\dots,n$ for the true linear model with $\boldsymbol{\beta}=\mathbf{1}_{q\times 1}$. An additional $(p-q)$-dimensional term $\mathbf{x}^a$ is generated without being used in the true model, since we want to test if the subsample helps identify the appropriate relationship between the responses and the true covariates. 
\begin{enumerate}[\text{Case} 1.]
    \item $q=10$, $\mathbf{X}^a_i$ follows a multivariate normal distribution, where columns of $\mathbf{X}^a_{i}$ are i.i.d. samples from $N(0,1)$.
    \item $q=10$, $\mathbf{X}^a_i$ follows a multivariate lognormal distribution, where columns of $\mathbf{X}^a_{i}$ are i.i.d. samples from $LN(0,1)$.
    \item $q=10$, $\mathbf{X}^a_i$ follows a multivariate $t$ distribution with degrees of freedom $2$, where columns of $\mathbf{X}^a_{i}$ are i.i.d. samples from $t_2(0,1)$.
    \item $q=25$, $\mathbf{X}^a_i$ follows a multivariate normal distribution, where columns of $\mathbf{X}^a_{i}$ are i.i.d. samples from $N(0,1)$.
    \item $q=25$, $\mathbf{X}^a_i$ follows a multivariate lognormal distribution, where columns of $\mathbf{X}^a_{i}$ are i.i.d. samples from $LN(0,1)$.
    \item $q=25$, $\mathbf{X}^a_i$ follows a multivariate $t$ distribution with degrees of freedom $2$, where columns of $\mathbf{X}^a_{i}$ are i.i.d. samples from $t_2(0,1)$.
\end{enumerate}

First, we show that subsampling by using the fast approximation of ridge leverage score (ROPT) is similar to that by applying the accurate one (ROPT-acc). Figure~\ref{simu_pi} displays the comparison result of MSE of the estimators by using the two sampling probabilities. Both methods have similar performance in all 6 cases given different subsample sizes. Therefore, the effectiveness of the $\ell_2$ approximation of ridge leverage score is demonstrated. \begin{figure}[h!]
\centering
\subfigure{
\begin{minipage}[b]{0.31\linewidth}
\includegraphics[width=1\linewidth]{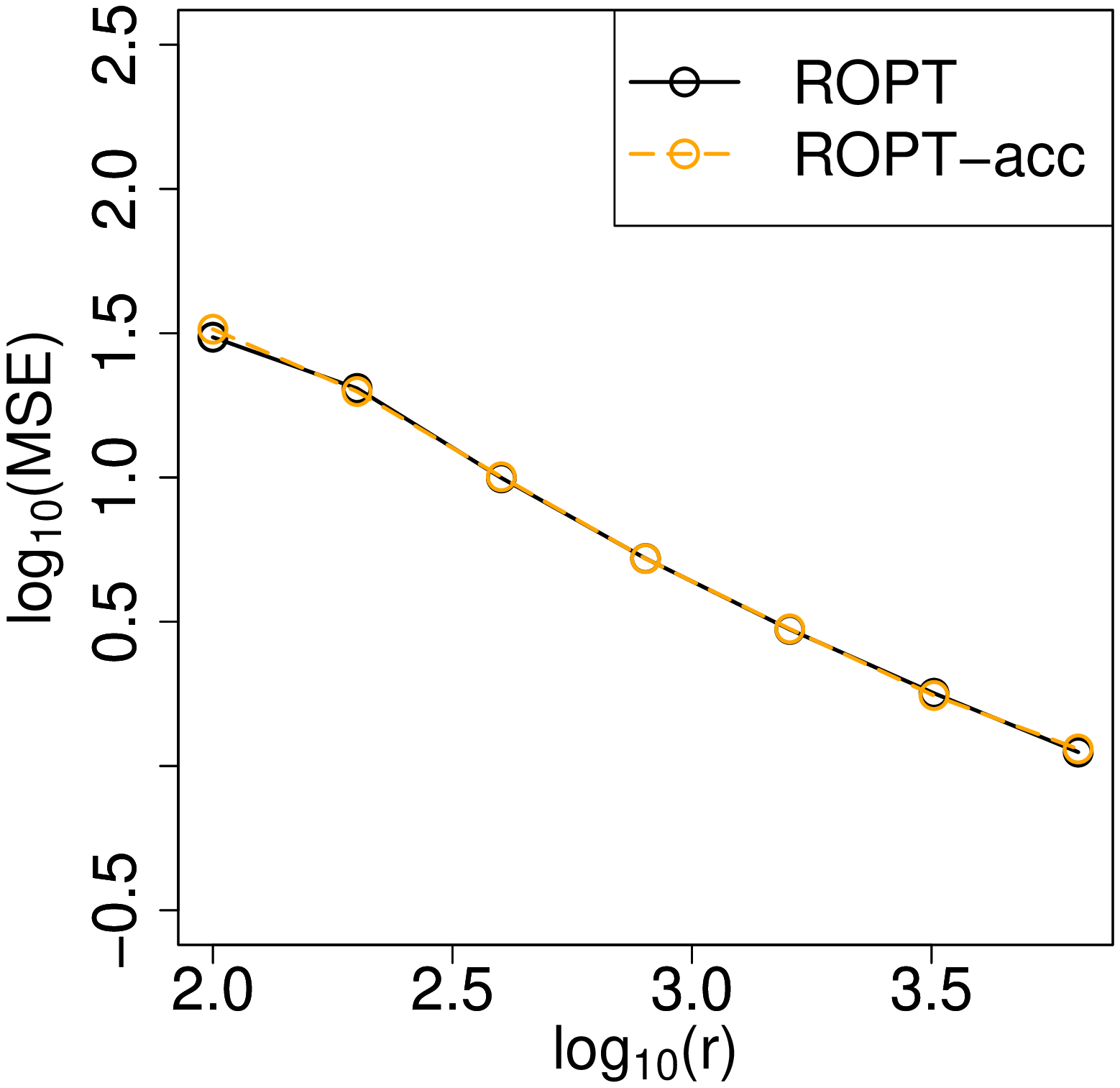}\vspace{4pt}
\includegraphics[width=1\linewidth]{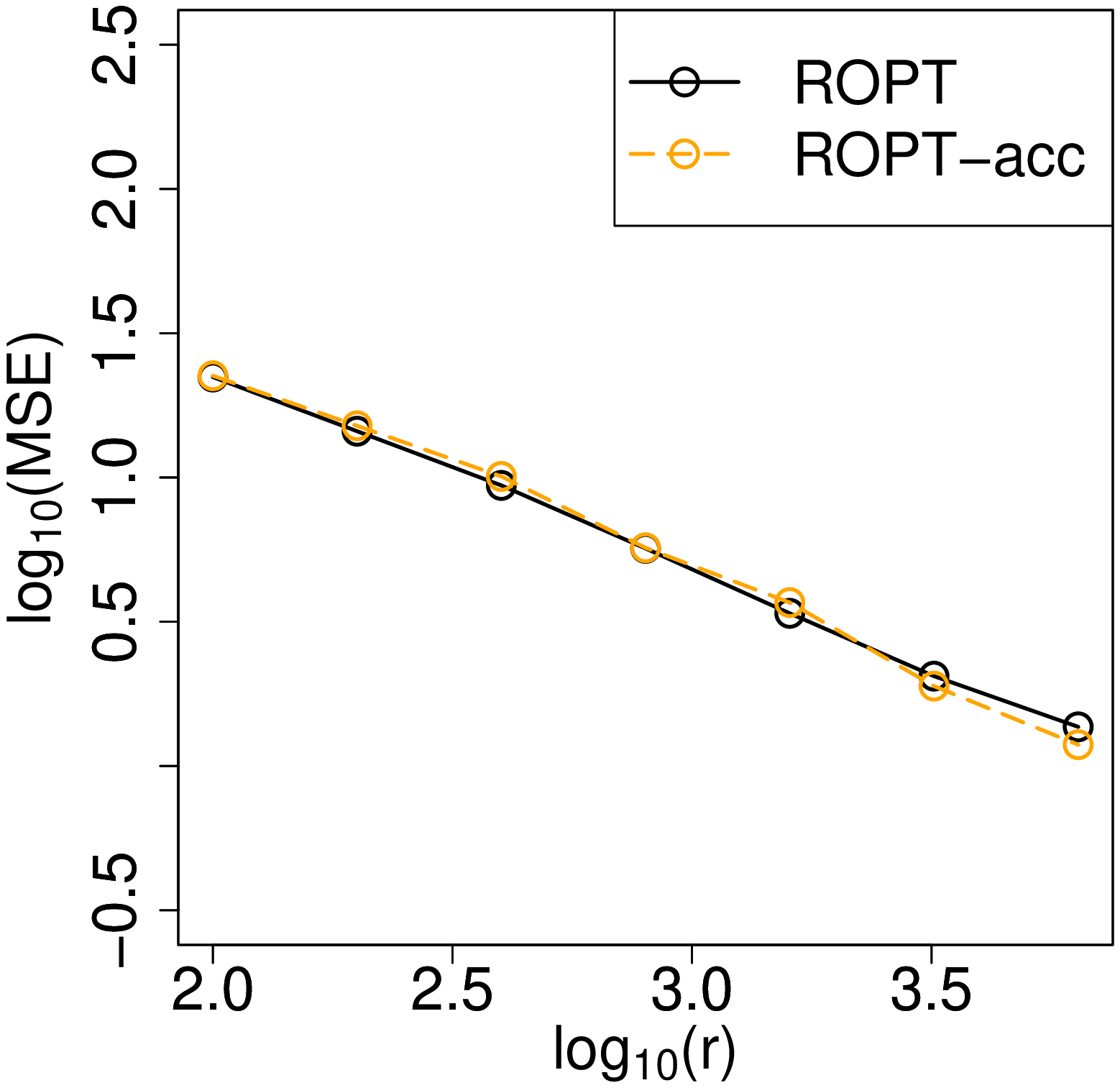}
\end{minipage}}
\subfigure{
\begin{minipage}[b]{0.31\linewidth}
\includegraphics[width=1\linewidth]{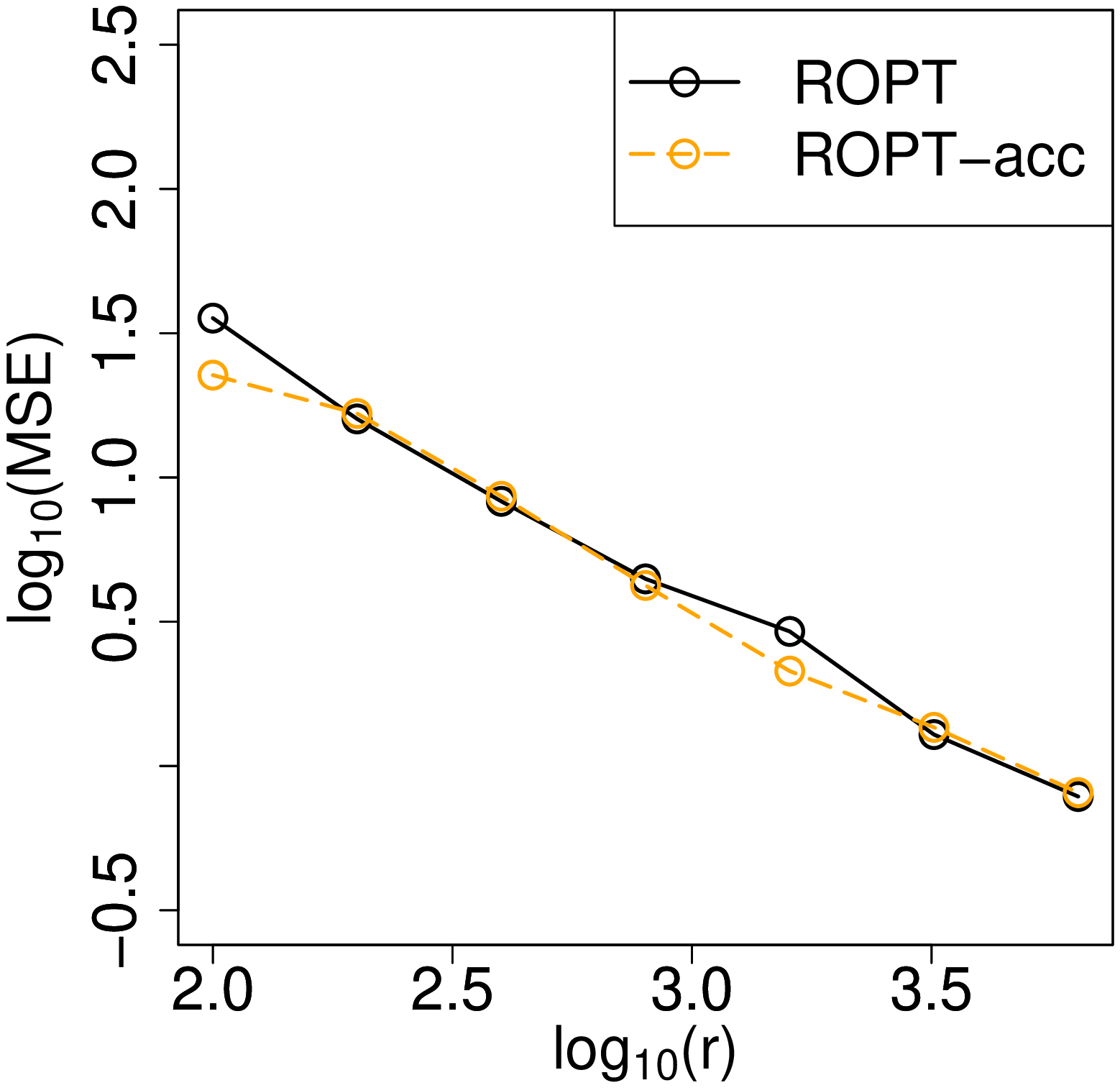}\vspace{4pt}
\includegraphics[width=1\linewidth]{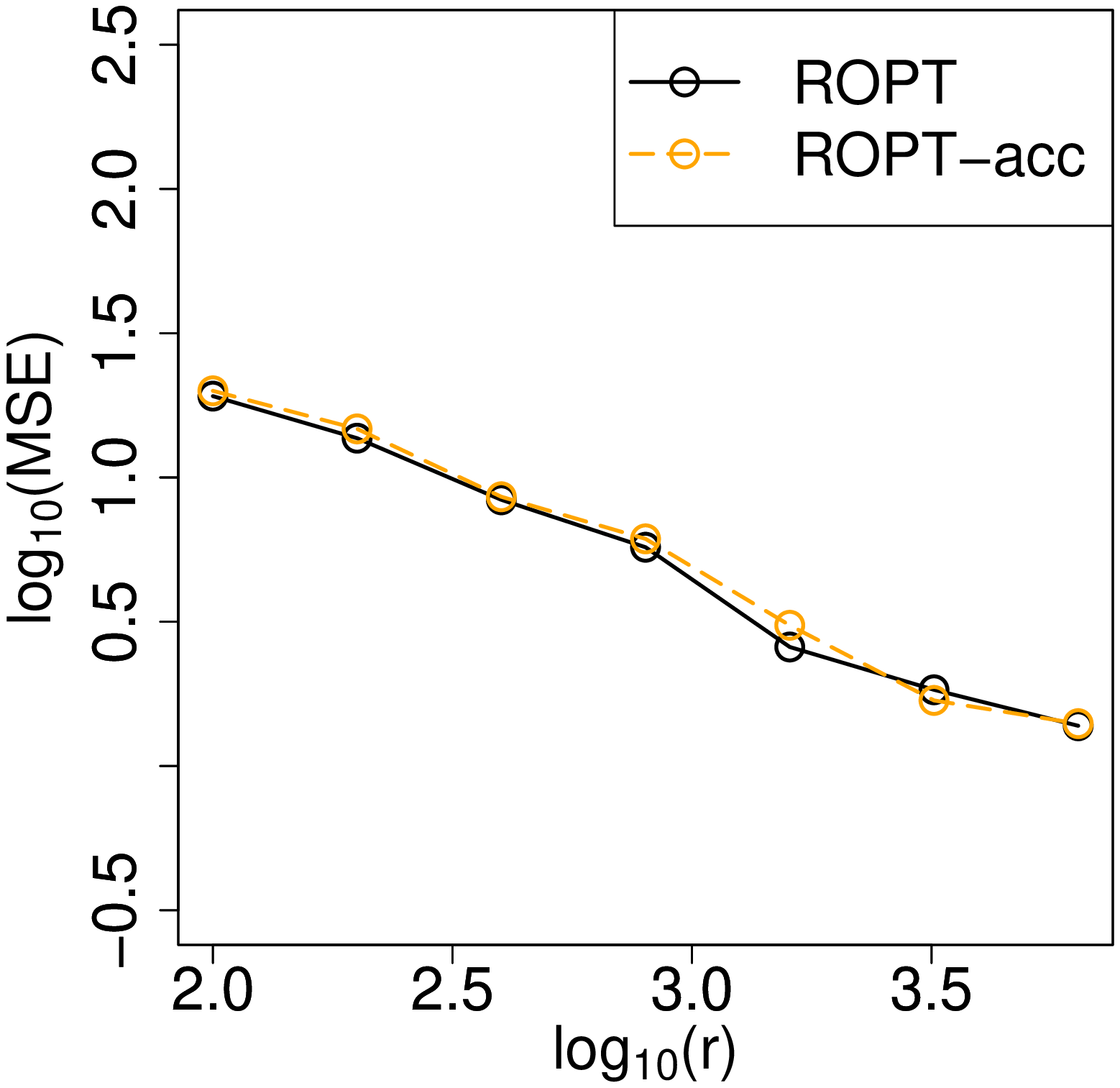}
\end{minipage}}
\subfigure{
\begin{minipage}[b]{0.31\linewidth}
\includegraphics[width=1\linewidth]{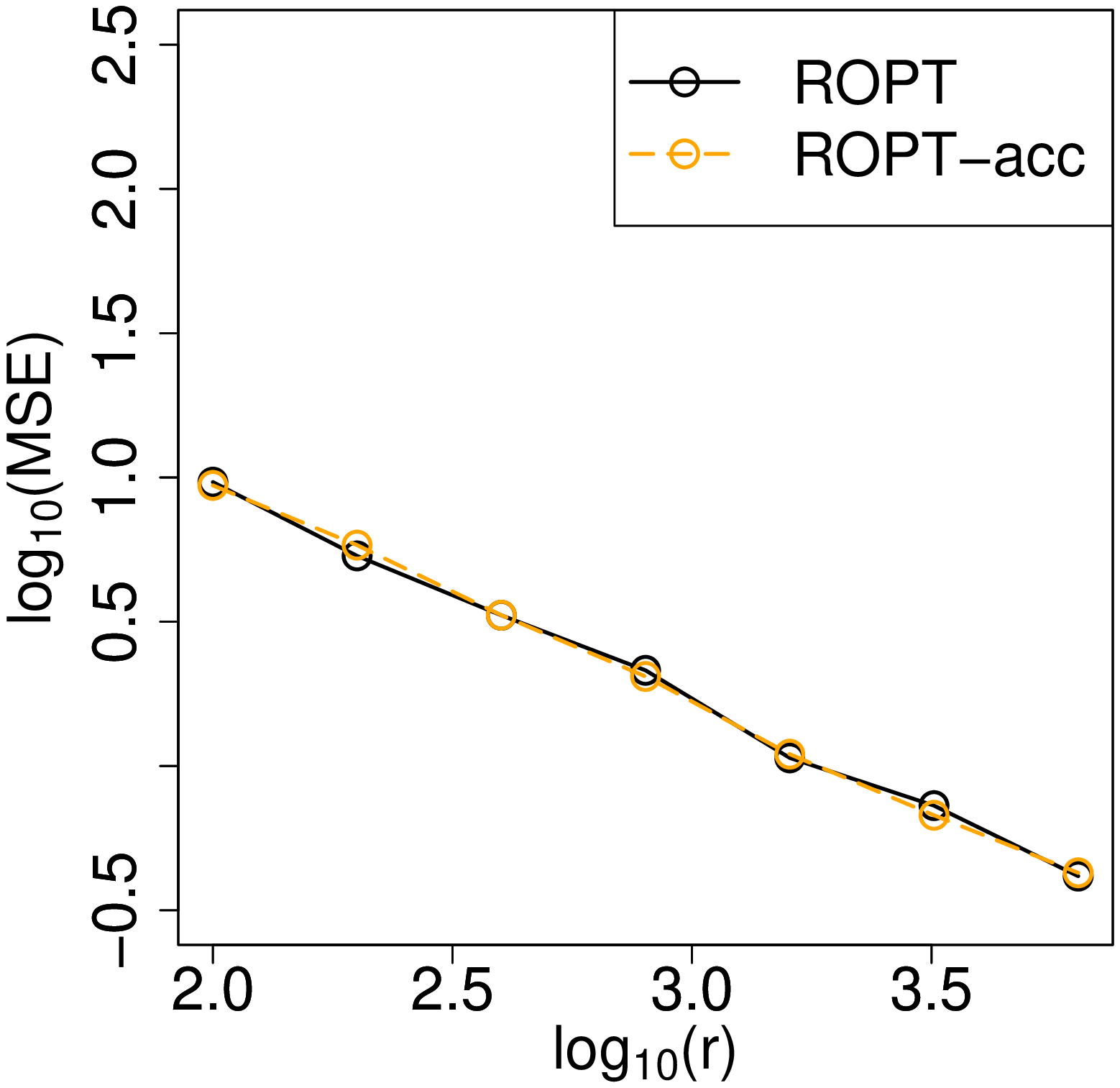}\vspace{4pt}
\includegraphics[width=1\linewidth]{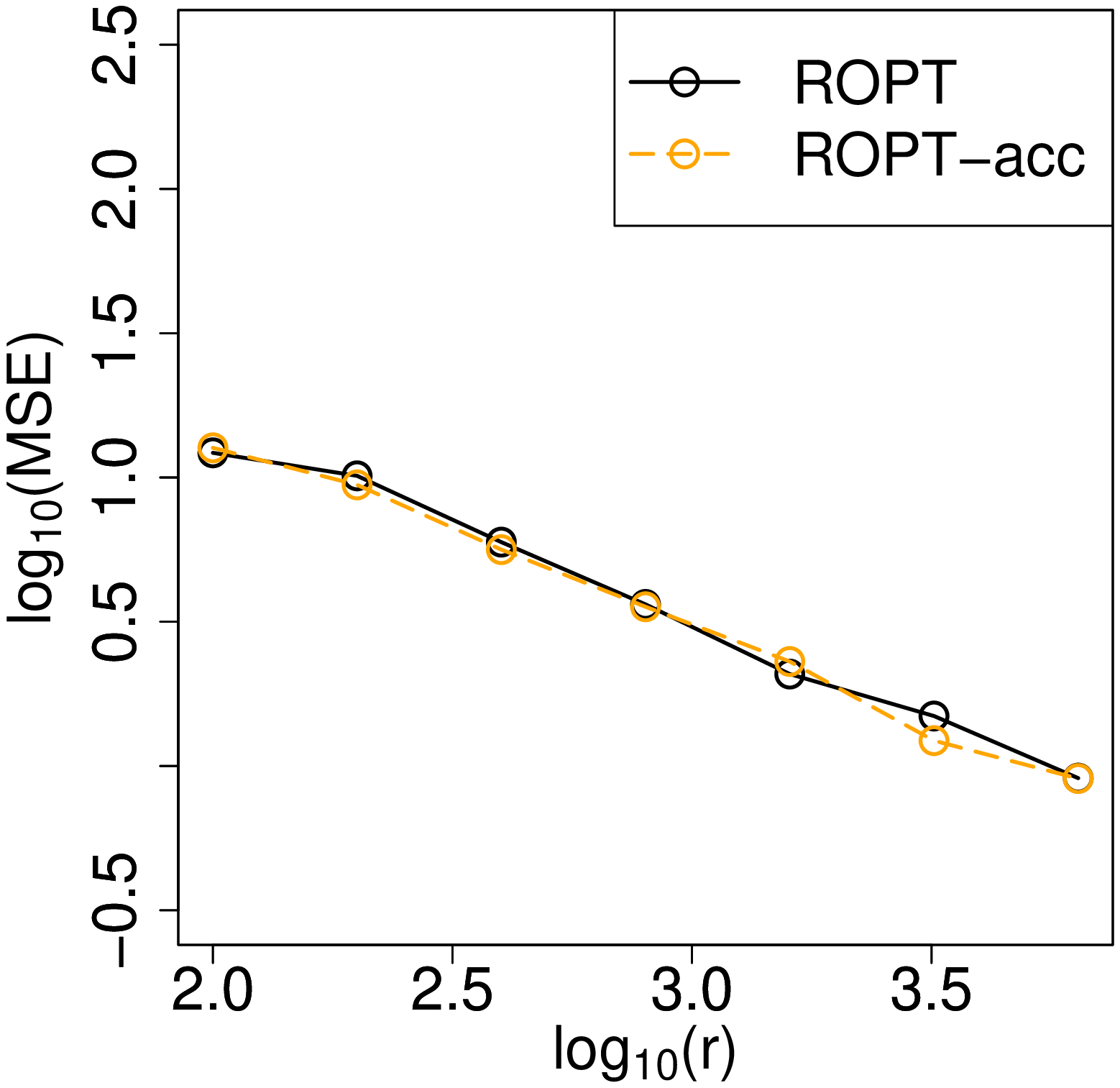}
\end{minipage}}
\caption{Comparison of different subsampling probabilities: $x$-axis is the logarithm of the subsample size, $y$-axis is the logarithm of the mean squared error of $\widetilde{\boldsymbol{\beta}}$.}
\label{simu_pi}
\end{figure}

%Comparison with other methods
Then we compare our proposed method with the ridge leverage score subsampling (RLEV), uniform subsampling for ridge regression (RUNIF), optimal subsampling for linear regression (OPT), and the information-based optimal subdata selection for linear regression (IBOSS). 
The first two competitors are raised for the ridge regression while the rest two are proposed for the linear regression. 
\begin{figure}[h!]
\centering
\subfigure{
\begin{minipage}[b]{0.31\linewidth}
\includegraphics[width=1\linewidth]{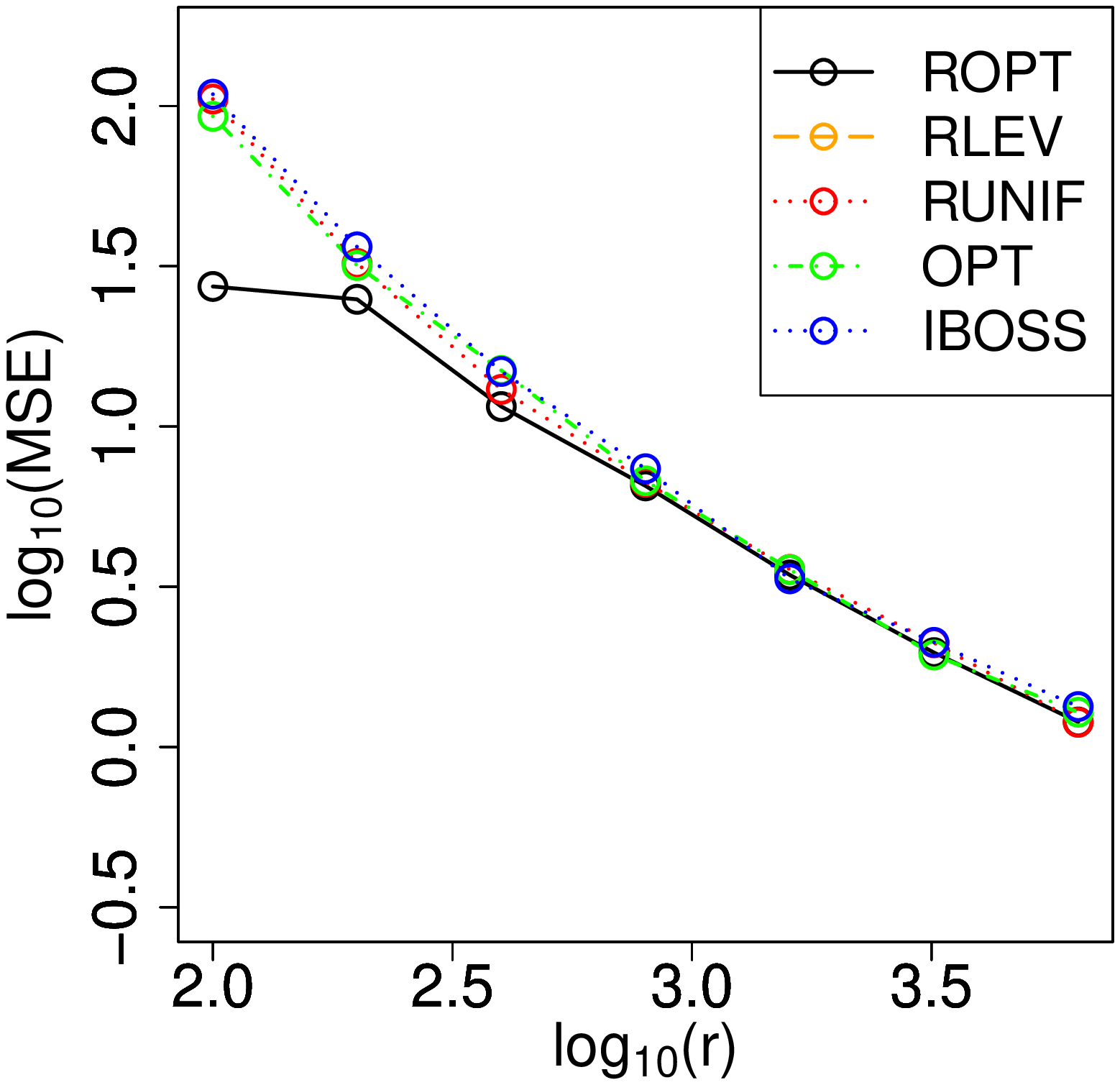}\vspace{4pt}
\includegraphics[width=1\linewidth]{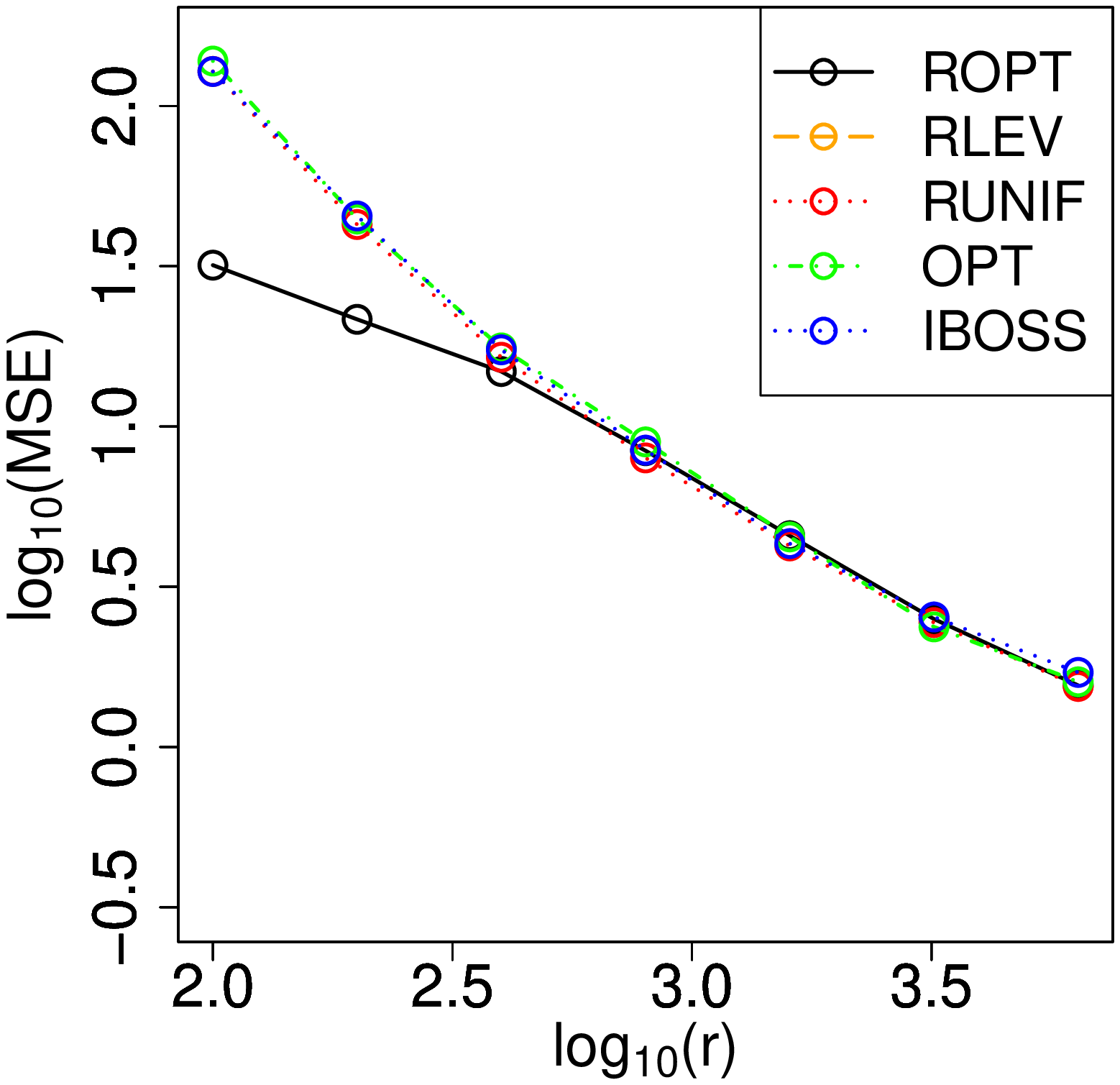}
\end{minipage}}
\subfigure{
\begin{minipage}[b]{0.31\linewidth}
\includegraphics[width=1\linewidth]{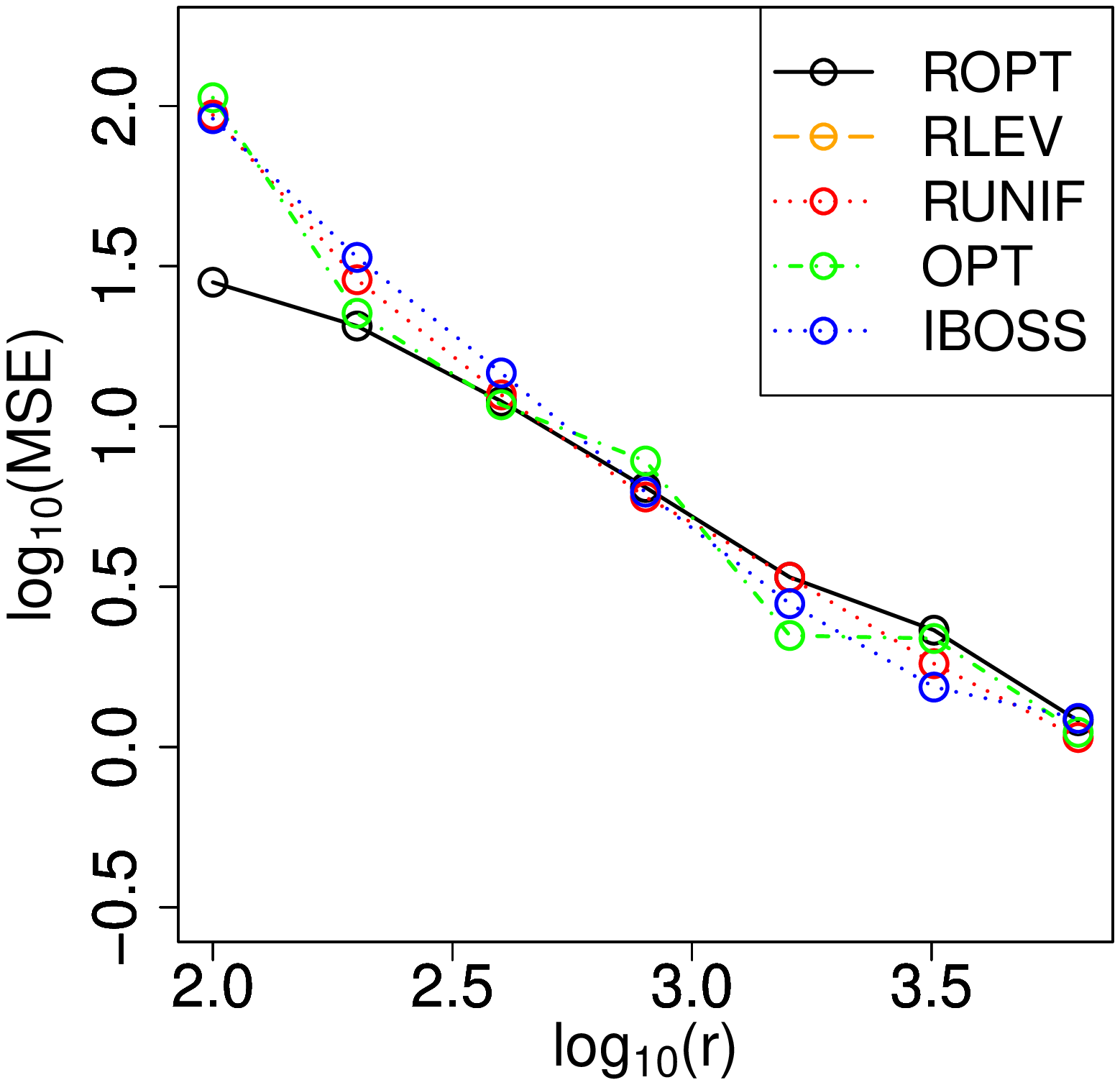}\vspace{4pt}
\includegraphics[width=1\linewidth]{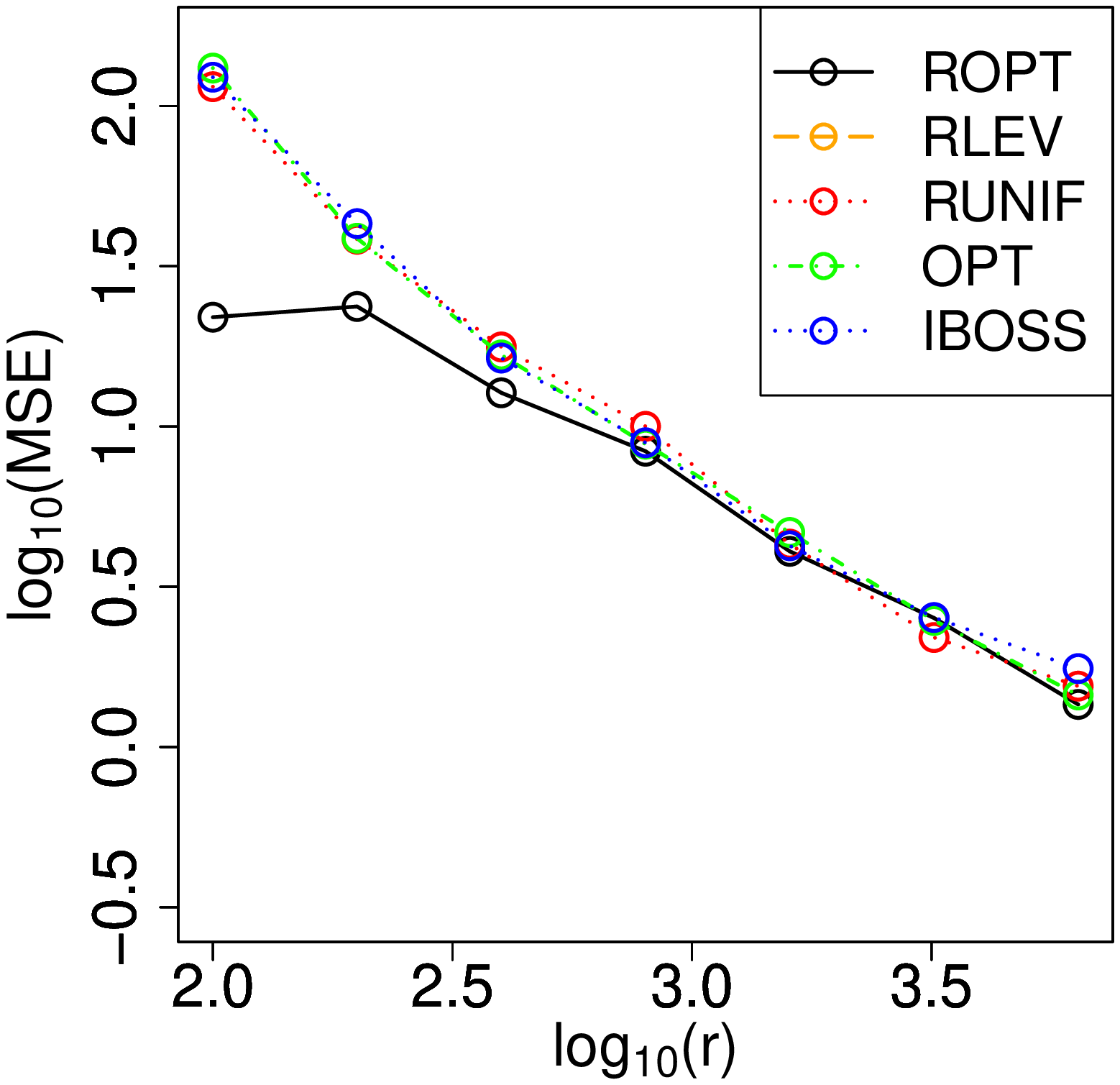}
\end{minipage}}
\subfigure{
\begin{minipage}[b]{0.31\linewidth}
\includegraphics[width=1\linewidth]{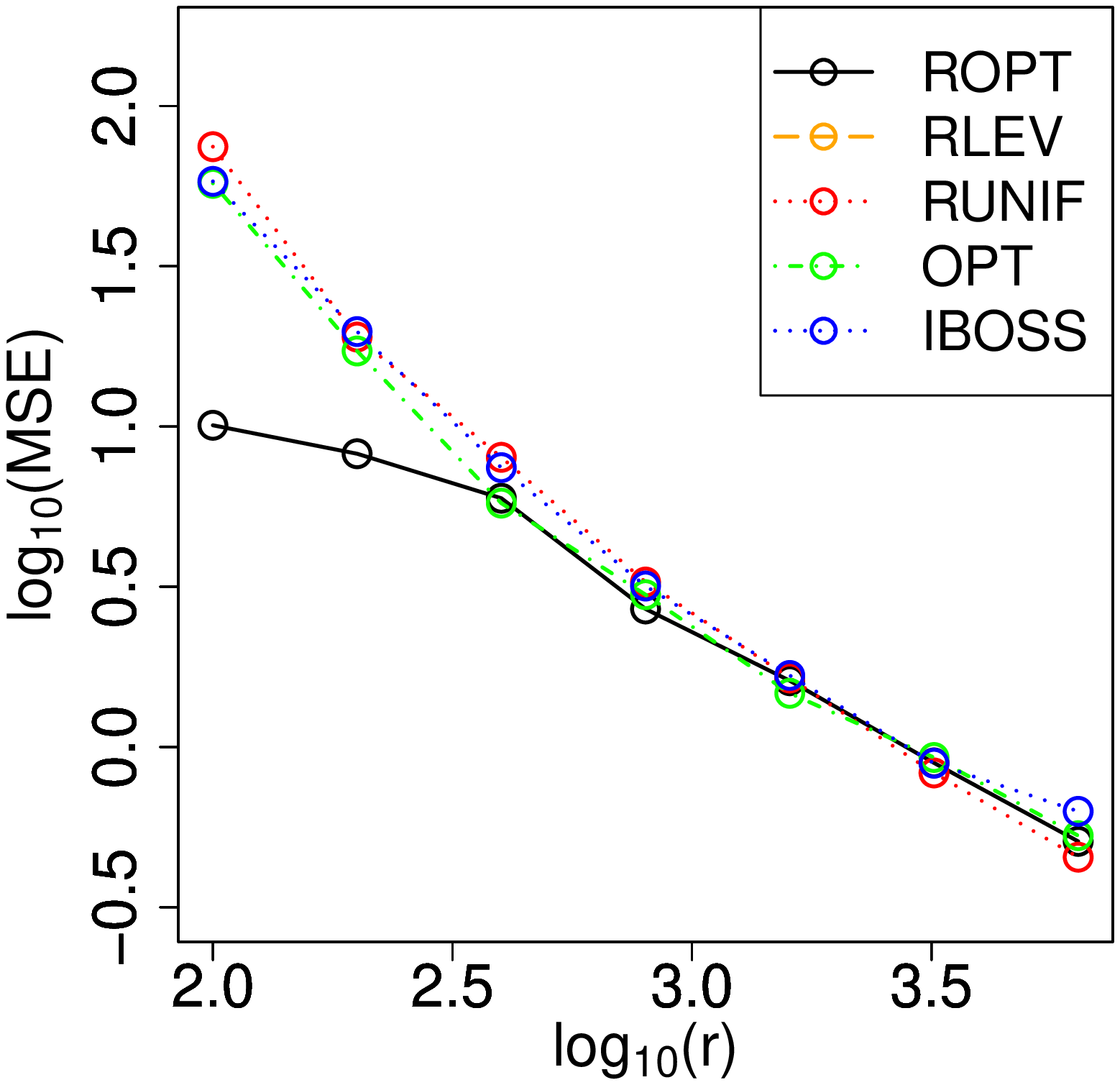}\vspace{4pt}
\includegraphics[width=1\linewidth]{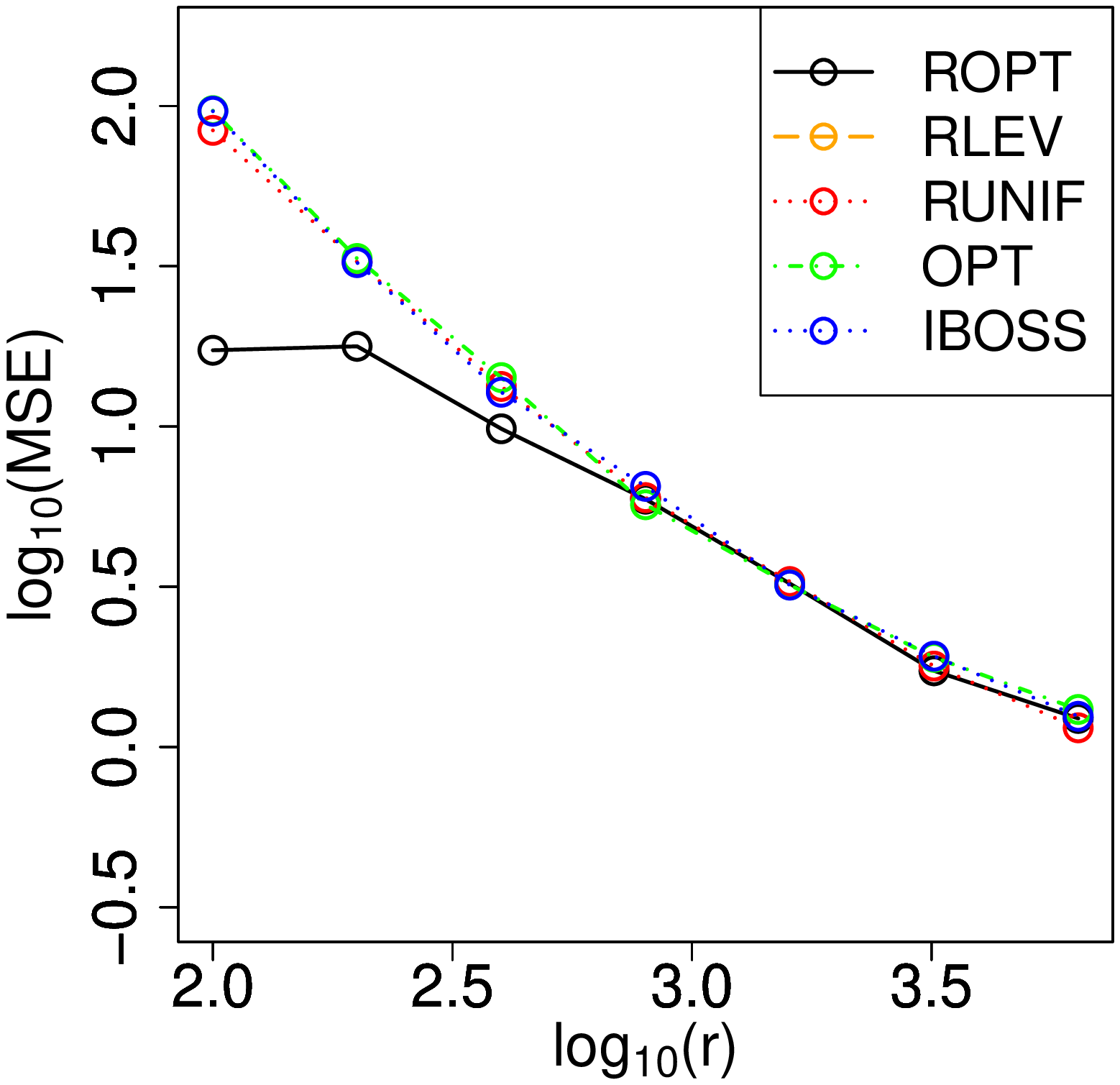}
\end{minipage}}
\caption{Comparison of different subsampling estimators: $x$-axis is the logarithm of the subsample size, $y$-axis is the logarithm of the mean squared error of the estimator $\widetilde{\boldsymbol{\beta}}$ compared with true $\boldsymbol{\beta}$.}
\label{simu_compnew}
\end{figure}
In Figure~\ref{simu_compnew}, our algorithm has the best performance among all the 6 cases when the subsample sizes are small or moderate, while all the methods have similar performance when the subsample sizes are large. First, our method has a great advantage when we use a small subsample. Second, by comparing the two rows of Figure~\ref{simu_compnew}, we can find our model preserves its superiority over other models for linear regression even in the cases where the true model favors less for introducing the ridge penalty. 

%%%%%%%%%%%%%%%%%%%%%%%%%%%%%%%%%%%%%%%%
\section{Real data example}\label{real}
In times of information explosion, people are surrounded by a sea of news from various sources all day and night. For online media, it is critical for them to know what kind of news can attract the public attention, and hence the prediction of the popularity of the news becomes a trendy research topic. To raise the accuracy, numerical features from content, keywords, publish day and earlier popularity of news referenced in the article are extracted and then fed into a regression model to predict the share of the news. We use the open dataset of Online News Popularity Data Set on UC Irvine Machine Learning Repository\footnote{http://archive.ics.uci.edu/ml/datasets/Online+News+Popularity}, which was provided by \cite{fernandes2015proactive}.

The data was collected from Mashable, which is one of the largest news websites, from January 7, 2013, to January 7, 2015. It contains more than 39,000 articles in around 700 days. Except for the two non-predictable features, there is one response, the number of shares, and 58 predictive attributes concerning words, links, media, time, keywords, and natural language processing. Since the number of observations is huge and the number of features is also relatively large, it is prohibitive to allocate the memory for calculating the regression estimator. Therefore, we use the subsampling method to reduce the computation cost. The dataset is randomly divided into $70\%$ for training and $30\%$ for testing. Subsample sizes are set as $r=100,200,400,800,1600,3200,6400$. The design matrix $\mathbf{X}$ is standardized before being fed into the model. Each experiment is repeated $20$ times. Because the true regression coefficient $\boldsymbol\beta$ is unknown, we first compare our estimator $\widetilde{\boldsymbol{\beta}}$ with full-sample estimator $\widehat{\boldsymbol{\beta}}$ in terms of MSE.

\begin{figure}[h]
\centering
\includegraphics[width=0.45\linewidth]{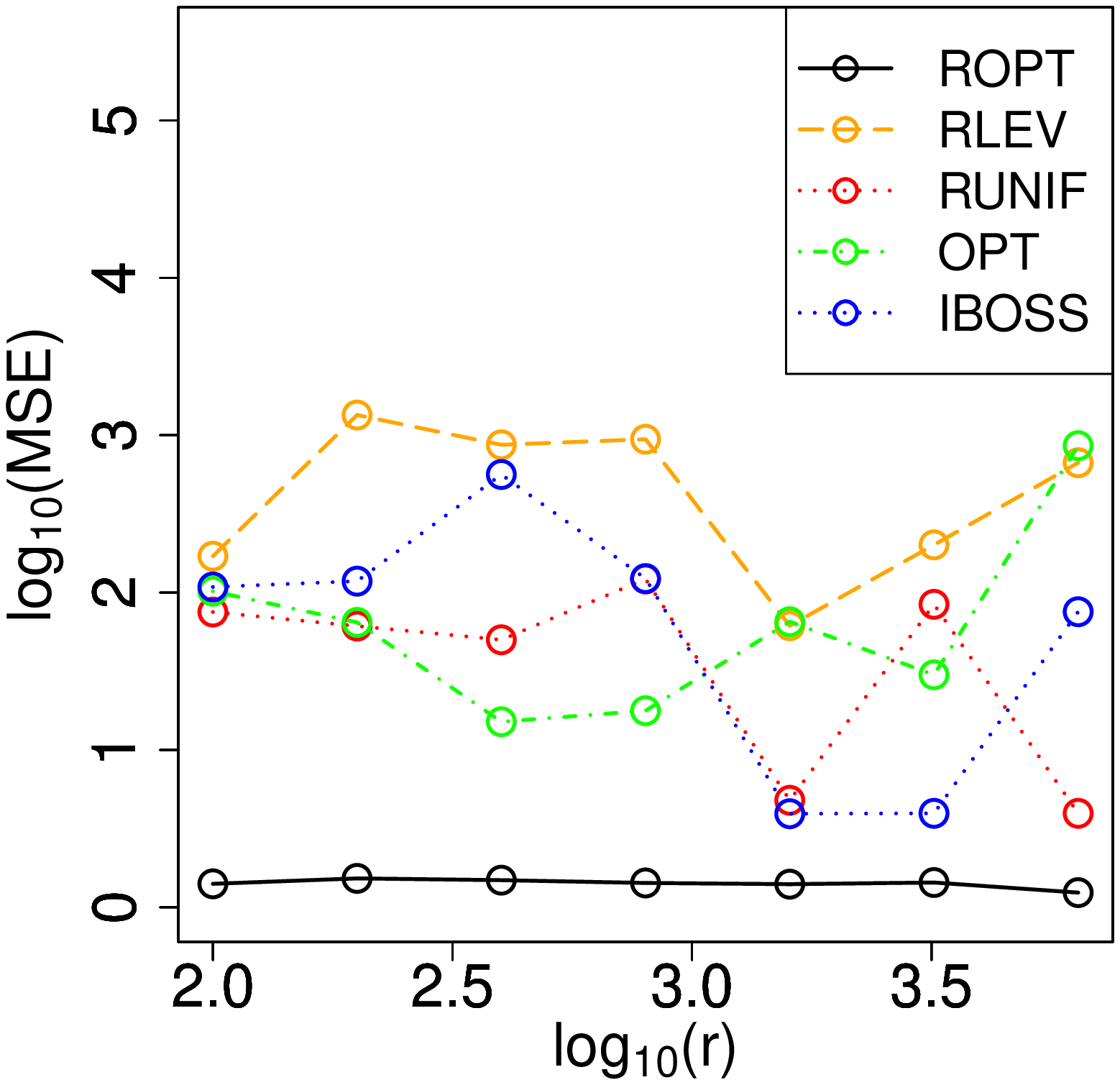}
\caption{Comparison of different subsampling estimators: $x$-axis is the logarithm of the subsample size, $y$-axis is the logarithm of the mean squared error of the estimator $\widetilde{\boldsymbol{\beta}}$ compared with full-sample estimator $\widehat{\boldsymbol{\beta}}$.}
\label{realdata}
\end{figure}

In Figure~\ref{realdata}, we plot the MSE of the estimators calculated by different methods. Our method has the best performance at various subsample sizes compared with the competing methods. 
Finally, to better compare the performance of different methods, we report the test error under various subsample sizes in Table~\ref{tab}. Our method keeps the advantage compared with other methods as the subsample size grows.

\begin{table}[h]
\centering
\begin{tabular}{c|c|c|c|c|c|c|c}
\toprule
r             & 100  & 200 & 400 & 800 & 1600 & 3200 & 6400   \\ 
\midrule
ROPT     & \textbf{0.007} & \textbf{0.043} & \textbf{0.015} & \textbf{0.012} & \textbf{0.013} & \textbf{0.015} & \textbf{0.029} \\ \hline
RLEV     & 1.560 & 1.754 & 0.594 & 1.032 & 0.555 & 0.516 & 0.285 \\ \hline
RUNIF & 1.317 & 1.389 & 1.062 & 1.160 & 0.350 & 0.653 & 0.289 \\ \hline
OPT & 1.819 & 1.571 & 0.978 & 0.633 & 1.684 & 1.412 & 2.761 \\ \hline
IBOSS & 1.425 & 1.030 & 1.892 & 1.261 & 0.402 & 0.349 & 1.137 \\ 
\bottomrule
\end{tabular}
\caption{The logarithm of the test error comparison under different subsample sizes.}
\label{tab}
\end{table}

\bibliographystyle{ims}
\bibliography{bib-power}

\end{document}